\def\disp{\displaystyle}
\def\crr{\cr\noalign{\vskip2mm}}
\def\dref#1{(\ref{#1})}
\theoremstyle{plain}
\newtheorem{theorem}{Theorem}[section]
\newtheorem{lemma}{Lemma}[section]
\newtheorem{assumption}{Assumption}[section]
\numberwithin{equation}{section}
\theoremstyle{definition}
\newtheorem{example}{Example}[section]
\newcommand{\R}{{\mathbb R}}
\def\A{\mathcal{A}}
\begin{document}
\title{{\bf    Boundary  Stabilization and Observation of an   Unstable
Heat Equation in a General  Multi-dimensional  Domain  \footnote{\small This work is
   supported  by
the National Natural Science Foundation of China (61873153,11901365).}} }
\author{Hongyinping Feng\footnote{\small
 Corresponding author.  Email: fhyp@sxu.edu.cn.   },\ \   Pei-Hua Lang\ \   and\  \ Jiankang Liu
\\ {\it  School of Mathematical Sciences}\\
{\it Shanxi University,  Taiyuan, Shanxi, 030006, P.R.
China}\\
\\
}
\date{}
\maketitle
\allowdisplaybreaks[4]
\begin{abstract}
\normalsize
In this paper, we consider the  exponential stabilization and observation of
an     unstable heat  equation in a general  multi-dimensional domain  by combining the   finite-dimensional spectral truncation technique and  the
  recently developed dynamics compensation approach.
  In contrast to
 the unstable  one-dimensional partial differential equation (PDE),  such as the transport equation,  wave equation and the heat equation, that can be treated by the well-known PDE
  backstepping method,
  stabilization of   unstable PDE  in  a general multi-dimensional domain is still a challenging   problem. We treat the stabilization and observation problems separately.
A dynamical state feedback law is proposed  firstly to stabilize the unstable  heat equation
exponentially and then a state  observer is designed via a boundary measurement.
 Both the stability of the closed-loop system and the well-posedness of the observer are
  proved. Some of the theoretical results are   validated  by the numerical simulations.

\end{abstract}
\vspace{0.3cm}
\noindent {\bf Keywords:}~    Dynamics feedback,  Multi-dimensional heat equation, Observer,   Stabilization, Unstable system.
\vspace{0.3cm}
\section{Introduction}\label{S1}

Since the backstepping approach was first introduced into the
systems described by the partial differential equations (PDEs) in
\cite{Liuw},  \cite{Andrey1},  and \cite{Andrey2},  the landscape of
one-dimensional PDEs control has completely changed.  This can be
seen from its success in stabilizing  the  unstable \cite{Andrey3,
Andrey4} or even anti-stable wave systems \cite{MK}, which are
almost formidable by other approaches. However, the backstepping
approach seems only effective for   one-dimensional PDEs
and
  there is a
formidable obstacle  to  applying this  approach to multi-dimensional PDEs except for
some    domains
of specific geometry.

Very recently, a new approach, called  dynamics compensation approach, has been developed to
 cope with the actuator and sensor dynamics compensations in the abstract framework (\cite{FPart1}, \cite{FPart2}).  Interestingly,   this
  approach can also be used to the control of unstable PDEs.
 As an example, the  stabilization  and observation of an unstable one-dimensional
heat equation have been  considered in \cite{FPart1} and  \cite{FPart2}, respectively.
In addition to the existing backstepping method,
 the dynamics compensation approach  gives a new and
  completely different  way to cope with the unstable PDEs.
  In this paper, we will apply this new approach to
     the multi-dimensional unstable heat equation in a general domain,  which is
     commonly recognized as  a longstanding puzzled problem.



%
%

Suppose that  $\Omega\subset \R^n (n \geq2)$   is  a bounded domain with $C^2$-boundary $\Gamma
  $,
    $ \Gamma_1 $ is a non-empty connected open set in $ \Gamma$,  $ \Gamma_0=\Gamma\setminus\overline{\Gamma_1}$ and
    $\Gamma_0\neq \emptyset$.
    Let $\nu$ be the unit
outward normal vector of $\Gamma_1$ and let $\Delta$ be the usual  Laplacian which is defined by
  \begin{equation}\label{2021162039}
  \Delta f= \sum_{i=1}^n \frac{  \partial^2 f }{  \partial x_i^2},\  \ \forall\ f\in H^2(\Omega).
\end{equation}
We consider the following  heat equation:
\begin{equation}\label{2020981903}
\left\{\begin{array}{l}
 w_{t}(x,t)= \Delta w (x,t)+\mu w(x,t),\ \   x\in \Omega, \ t> 0,\crr
 w (x,t)=0,\ \ \ x\in\Gamma_0,\ \ \disp \frac{\partial w(x,t)}{\partial \nu}=u(x,t),\ \ x\in\Gamma_1,\ \  t\geq0,\crr
 y(x,t)=w(x,t),\ \ x\in\Gamma_1,\ \ t\geq0,
\end{array}\right.
\end{equation}
where $w(\cdot,t)$ is the state,   $\mu >0$,   $u$ is the control and
$y$ is the output. System \dref{2020981903}
  is a general  heat equation with interior
convection.
In physics and engineering contexts, it
 describes the flow of heat in a homogeneous and isotropic medium, with $w(x, t)$ being the temperature at the point $x$ and time $t$.  The  more detailed   physical
 interpretation of the
  heat equation can be found in  \cite{Heatbook}.

  By a simple computation,    we can see  that there are some
   eigenvalues of the   open-loop system \dref{2020981903} (with $u(\cdot,t)\equiv0$)
in the right-half plane provided $\mu$  is sufficiently large. This shows that
the  open-loop system \dref{2020981903} is  unstable for large $\mu$. The
 lower-order term $\mu w(\cdot,t) $ of \dref{2020981903} is  usually referred to as source term or unstable term.
Heat equations with unstable term or source term have
  been extensively studied by the method of PDE  backstepping.
 Examples can be found in   \cite{Pisano}, \cite{Meu}, \cite{Backstepping1}, \cite{Backstepping2}, and \cite{Andrey3},
 to name just a few. The  PDE  backstepping method  is powerful and is still valid to other one-dimensional distributed parameter
systems  such as the   wave equation \cite{MK}, Schr\"{o}dinger  equation \cite{GuoLiu2104},
 the first order hyperbolic equation \cite{Kristic2008hyperbolic} as well as
some special
Euler-Bernoulli beam \cite{Andrey5}.
However,  the  application of backstepping method  seems to stop
  in front of  unstable   PDEs in the general multi-dimensional domain.
 There  still  exist   formidable obstacles  to applying  this approach to general multi-dimensional PDEs.


In this paper, we  combine the newly developed dynamics compensation approach    \cite{FPart1,FPart2} and the finite-dimensional spectral truncation
technique \cite{CoronTrelat2004SICON,PrieurandTrelat2019TAC}  to cope with the unstable system \dref{2020981903}.
The control objective is to stabilize the system exponentially  by virtue of the measurement output.
 Owing to   the separation principle of
the linear systems, the output feedback will be available   once
we address the following two  problems: (i), stabilize  system \dref{2020981903} by a full state feedback; (ii), design a  state observer in terms of the measurement output.
We will consider these two problems separately.

We consider  system \dref{2020981903} in the state space
  $  L^2(\Omega)$.
   Let
  \begin{equation}\label{20209101543}
  \left.\begin{array}{l}
\disp A f =\Delta f,\  \ \forall\ f\in D(A)
  =\left\{f\ |\; f\in H^2(\Omega)\cap H_{\Gamma_0}^1(\Omega),
\frac{\partial f}{\partial\nu}\big|_{\Gamma_1}=0\right\},
\end{array}\right.
\end{equation}
 where $H_{\Gamma_0}^1(\Omega)=\{f\in H^1(\Omega)|\; f=0
\hbox{ on }\Gamma_0\}$.
 Then $A$
   generates an exponentially
stable analytic semigroup on $L^2(\Omega)$.
 It is well known (e.g. \cite[p.668]{LT3})  that
 $D((-A)^{1/2})=H_{\Gamma_0}^1(\Omega)$ and $(-A)^{1/2}$ is a canonical
isomorphism from   $H^1_{\Gamma_0}(\Omega)$ onto  $L^2(\Omega)$. Moreover, the following
Gelfand triple compact inclusions  are valid:
\begin{equation}\label{4.3}
H^1_{\Gamma_0}(\Omega)=D((-A)^{1/2})\hookrightarrow L^2(\Omega) =[L^2(\Omega)]'\hookrightarrow
[D((-A)^{1/2} )]'=H^{-1}_{\Gamma_0}(\Omega) ,
\end{equation}
where $H^{-1}_{\Gamma_0}(\Omega)$ is the dual space of $H^{ 1}_{\Gamma_0}(\Omega)$ with the pivot
space  $L^2(\Omega)$.  An extension $\tilde{A}\in
{\mathcal{L}}(H^{1}_{\Gamma_0}(\Omega), H^{-1}_{\Gamma_0}(\Omega))$ of $A$ is defined
by
 \begin{equation}
\label{2020981918}\langle \tilde Ax,z\rangle_{H^{-1}_{\Gamma_0}(\Omega),
 H^{1}_{\Gamma_0}(\Omega)}
 =-\langle (-A)^{1/2}x,(-A)^{1/2}z\rangle_{L^2(\Omega)}, \;\  \forall \; x,z\in H^{1}_{\Gamma_0}(\Omega).
 \end{equation}
  Since $A$ is strictly negative, self-adjoint in $L^2(\Omega)$, and is the inverse of a compact
operator, the operator $A$ has the infinite sequence of negative eigenvalues
$\{\lambda_j\}_{j=1}^{\infty}$
 and a corresponding sequence of eigenfunctions  $\{\phi_j(\cdot)\}_{j=1}^{\infty}$  that forms an orthonormal
basis for $L^2(\Omega)$.
Without loss  of  generality, we always  assume that

 \begin{assumption}\label{Assum202010121559}
 Let the operator $ A$ be given by
   \dref{20209101543} and $\mu>0$.
 Suppose that the eigenpairs   $\{(\phi_j(\cdot), \lambda_j)\}_{j=1}^{\infty}$ of $A$  satisfy
 \begin{equation}\label{202010101700}
0>\lambda_1>\lambda_2>\cdots>\lambda_k>\cdots\to-\infty,
\end{equation}
and
\begin{equation}\label{20209101524}
\left\{\begin{array}{l}
\disp \Delta \phi_k=\lambda_k\phi_k,\ \ \ \|\phi_k\|_{L^2(\Omega)}=1,\crr
\disp  \phi_k(x)=0,\ x\in \Gamma_0,\ \frac{\partial \phi_k(x)}{\partial \nu}=0,\ x\in\Gamma_1,  \ \
\end{array}\right.k=1,2,\cdots.
\end{equation}
Suppose that  $N$  is an integer that satisfies
\begin{equation}\label{201912301959}
   \lambda_k+\mu<0 ,\ \ \forall\ k>N.
 \end{equation}
\end{assumption}

  Define
the Neumann map $\Upsilon\in \mathcal{L}(L^2(\Gamma_1), $ $
H^{3/2}(\Omega))$ (\cite[p. 668]{LT3}) by $\Upsilon u=\psi$ if and
only if
\begin{equation}
\label{4.5} \left\{\begin{array}{l} \Delta \psi=0 \hbox{ in }\Omega,
\crr\disp  \psi|_{\Gamma_0}=0, \;  \frac{\partial
\psi}{\partial\nu}\big|_{\Gamma_1}=u.
\end{array}\right.
\end{equation}
Using  the Neumann map, one can write \dref{2020981903}  in $H^{-1}_{\Gamma_0}(\Omega)$
as
\begin{equation}
\label{2020981933}
\begin{array}{l}
\disp \dot{w}(\cdot,t)  =
 \Delta w(\cdot,t) -\Delta\psi+\mu w(\cdot,t) =\Delta( w(\cdot,t)-\psi)+\mu w(\cdot,t)\crr
 \disp =
 \tilde{A}(w(\cdot,t)-\psi)+\mu w(\cdot,t)=
 (\tilde{A}+\mu )  w(\cdot,t) -\tilde{A} \Upsilon u(\cdot,t).
 \end{array}
\end{equation}
That is
\begin{equation}
\label{2020981943}  \dot{w}(\cdot,t)= (\tilde{A}+\mu )  w(\cdot,t)+Bu(\cdot,t) \quad\mbox{in}\quad H_{\Gamma_0}^{-1}(\Omega),
\end{equation}
 where $B\in {\mathcal{L}}(L^2(\Gamma_1),H_{\Gamma_0}^{-1}(\Omega))$ is given by
\begin{equation}\label{4.8}
 Bu=-\tilde{A}\Upsilon u, \; \forall \; u\in L^2(\Gamma_1).
\end{equation}
Define  $B^*\in {\cal L}(H^1_{\Gamma_0}(\Omega),L^2(\Gamma_1))$   by
 \begin{equation}
\label{2020910949ad1012}
 \langle B^{*}f,u\rangle_{L^2(\Gamma_1)}=\langle
f,B u\rangle_{H^{1}_{\Gamma_0}(\Omega),  H^{-1}_{\Gamma_0}(\Omega) }, \forall \; f\in
H^{1}_{\Gamma_0}(\Omega), u\in L^2(\Gamma_1).
\end{equation}
Then, for any $f\in D(A)=D(A^*)$ and $u\in L^2(\Gamma_1)$,  it follows from \dref{2020981918}, \dref{20209101543},   \dref{2020910949ad1012} and \dref{4.5}  that
 \begin{equation}\label{20201012909}
 \begin{array}{l}
\disp  \langle B^{*}f,u\rangle_{L^2(\Gamma_1)}=\langle
f,-\tilde{A}\Upsilon u\rangle_{H^{1}_{\Gamma_0}(\Omega),  H^{-1}_{\Gamma_0}(\Omega) } =
\langle
A^*f,- \Upsilon u\rangle_{L^2(\Omega) } \crr
 \disp =
\langle
\Delta f,- \psi\rangle_{L^2(\Omega) } =\langle
\nabla  f,\nabla \psi\rangle_{L^2(\Omega) }=\int_{\Gamma_1}f(x)u(x)dx,
\end{array}
\end{equation}
which, together with the denseness of $D(A)$ in $ H^{1}_{\Gamma_0}(\Omega)$,    implies that
 \begin{equation}
\label{2020910949}
  B^{*}f =f|_{\Gamma_1},  \ \ \ \forall \; f\in
H^{1}_{\Gamma_0}(\Omega).
\end{equation}
Using the operators $A$, $B$ and $B^*$, the control plant \dref{2020981903} can be written abstractly
\begin{equation}\label{2021162137}
 \left\{\begin{array}{l}
\disp   \dot{w}(\cdot,t)= (\tilde{A}+\mu ) w(\cdot,t)+B u(\cdot,t) , \ \ t>0,\crr
 y(\cdot,t)=B^* w(\cdot,t),\ \ t\geq0.
 \end{array}\right.
\end{equation}

The rest of the paper is organized as follows:
In Section \ref{Se.2}, we give a   spectral truncation stabilizer
that will be used in   the full state feedback design   in Section  \ref{Se.3}.
The exponential stability of the closed-loop system  is also proved in Section  \ref{Se.3}.
 Section \ref{Se.4}  gives  some  preliminary results
  about the  observer design.  Section \ref{Se.5} is devoted to the observer design and its well-posedness proof.
 Section \ref{Numerical} presents some numerical simulations,
 followed up conclusions in Section \ref{Conclusions}. For the sake of readability,
  some results that are less relevant to the feedback or observer
design are arranged  in the Appendix.

Throughout the paper,
the identity matrix on the   space $\R^n$  will be denoted by $I_n$.
The space of bounded linear operators from $X_1$ to $X_2$ is denoted by $\mathcal{L}(X_1, X_2 )$.
The space of bounded linear operators from $X$ to itself is denoted by $\mathcal{L}(X)$.
The spectrum, resolvent set    and the domain  of
the  operator $A$ are  denoted by $\sigma(A)$, $\rho(A)$,
  and $D(A)$, respectively.

\section{ A spectral truncation stabilizer }\label{Se.2}
 This section is devoted to the  preliminaries on  the state feedback design.
 Suppose that $p\in L^2(\Gamma_1)$ such that
\begin{equation}\label{20201091629}
 \int_{\Gamma_1}p(x)\phi_j(x)  dx  \neq 0,\ \ j=1,2,\cdots, N,
\end{equation}
where $\phi_j$ is  given by \dref{20209101524} and $N$ is an integer that satisfies \dref{201912301959}. The existence of such a function $p$ is trivial and is given  by Lemma \ref{Lm202010101612} in Appendix. In terms of the function $p$, we can define  the operator $P_p :\R\to L^2(\Omega)$ by
\begin{equation}\label{202010121608}
P_p  \theta=  \zeta_p ,\ \ \forall\ \ \theta\in \mathbb{R},
\end{equation}
where
$\zeta_p $ is the solution of the following system:
\begin{equation}\label{202010121609}
 \left\{\begin{array}{l}
\disp \Delta \zeta_p  =\theta\zeta_p \ \ \mbox{in}\ \ \Omega,  \crr
\disp  \zeta_p (x)=0,\ x\in\Gamma_0,\ \ \frac{\partial \zeta_p (x)}{\partial \nu}=  p(x)  ,\ x\in\Gamma_1.
 \end{array}\right.
\end{equation}

 \begin{lemma}\label{lm202010131038}
In addition to Assumption \ref{Assum202010121559},
 suppose that $p\in L^2(\Gamma_1)$ satisfies \dref{20201091629} and suppose that $\theta\in \R$ satisfies \begin{equation}\label{2020101311039}
 \theta \neq \lambda_j ,\ \ j=1,2,\cdots, N.
\end{equation}
   Then,
 the operator $P_{p}$ defined by \dref{202010121608} satisfies
  \begin{equation}\label{2020101310407}
 \langle P_p{\theta }, \phi_j\rangle _{L^2(\Omega)} \neq 0,\ \ j=1,2,\cdots, N.
\end{equation}

\end{lemma}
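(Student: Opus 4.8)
The plan is to apply Green's second identity to the pair $\zeta_p=P_p\theta$ and the eigenfunction $\phi_j$, which turns the $L^2(\Omega)$ inner product $\langle P_p\theta,\phi_j\rangle_{L^2(\Omega)}$ into a boundary integral over $\Gamma_1$. First I would write, for the solution $\zeta_p$ of \dref{202010121609} and the eigenfunction $\phi_j$ satisfying \dref{20209101524},
\[
\int_{\Omega}\left[(\Delta\zeta_p)\phi_j-\zeta_p(\Delta\phi_j)\right]dx
=\int_{\Gamma}\left[\frac{\partial\zeta_p}{\partial\nu}\,\phi_j-\zeta_p\,\frac{\partial\phi_j}{\partial\nu}\right]d\Gamma .
\]
Substituting $\Delta\zeta_p=\theta\zeta_p$ from \dref{202010121609} and $\Delta\phi_j=\lambda_j\phi_j$ from \dref{20209101524} makes the left-hand side collapse to $(\theta-\lambda_j)\langle\zeta_p,\phi_j\rangle_{L^2(\Omega)}$.

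Next I would evaluate the boundary term. On $\Gamma_0$ both $\zeta_p$ and $\phi_j$ vanish by the Dirichlet conditions in \dref{202010121609} and \dref{20209101524}, so that part contributes nothing. On $\Gamma_1$ we have $\partial\phi_j/\partial\nu=0$ and $\partial\zeta_p/\partial\nu=p$, so the boundary integral reduces to $\int_{\Gamma_1}p(x)\phi_j(x)\,dx$. Combining the two computations yields the key identity
\[
(\theta-\lambda_j)\,\langle P_p\theta,\phi_j\rangle_{L^2(\Omega)}=\int_{\Gamma_1}p(x)\phi_j(x)\,dx,\qquad j=1,2,\dots,N .
\]
Since $\theta\neq\lambda_j$ by \dref{2020101311039} and the right-hand side is nonzero by \dref{20201091629}, dividing by $\theta-\lambda_j$ gives exactly \dref{2020101310407}.

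The step I expect to require the most care is the rigorous justification of the Green identity above: with only $p\in L^2(\Gamma_1)$, the solution $\zeta_p$ of \dref{202010121609} has limited Sobolev regularity (essentially $H^{3/2}(\Omega)$, as for the Neumann map $\Upsilon$ in \dref{4.5}), so the classical integration-by-parts formula is not immediately applicable. I would handle this by testing the weak/variational formulation of \dref{202010121609} against the smooth function $\phi_j\in D(A)$ — precisely the type of generalized Green formula already used in deriving \dref{20201012909} — or, alternatively, by approximating $p$ by smooth boundary data, establishing the identity in the smooth case, and passing to the limit using continuity of the Neumann map. Either route legitimizes the identity and completes the argument; the remaining algebra (division by the nonzero factor $\theta-\lambda_j$) is immediate.
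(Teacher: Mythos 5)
Your proof is correct and follows essentially the same route as the paper: both derive the identity $(\theta-\lambda_j)\langle P_p\theta,\phi_j\rangle_{L^2(\Omega)}=\int_{\Gamma_1}p(x)\phi_j(x)\,dx$ by integration by parts (the paper applies Green's first identity twice, you invoke Green's second identity in one step, which is the same computation) and then divide by the nonzero factor $\theta-\lambda_j$. Your added remark on justifying the Green formula for $H^{3/2}$-regular $\zeta_p$ is a sensible point of rigor that the paper simply glosses over.
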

 \begin{proof}

 It follows from \dref{20209101524},     \dref{202010121608}  and \dref{202010121609} that
 \begin{equation}\label{202010131050}
 \begin{array}{rl}
\disp  \theta\int_{ \Omega } \zeta_p(x)\phi_j (x)dx&\disp =
 \int_{ \Omega } \Delta \zeta_p(x)\phi_j (x)dx\crr
 &\disp =
 \int_{ \Gamma_1 }  p(x)\phi_j (x)dx-\int_{ \Omega } \nabla \zeta_p(x) \nabla\phi_j (x)dx\crr
 &\disp =
 \int_{ \Gamma_1 }  p(x)\phi_j (x)dx+\lambda_j \int_{ \Omega }   \zeta_p(x)  \phi_j (x)dx,
 \end{array}
\end{equation}
which yields
 \begin{equation}\label{202010131051}
\disp  \int_{ \Omega } \zeta_p(x)\phi_j (x)dx= \frac{1}{  \theta -\lambda_j }
 \int_{ \Gamma_1 }  p(x)\phi_j (x)dx \neq0,\ \ j=1,2,\cdots,N.
\end{equation}
The proof is complete due to \dref{20201091629}.
\end{proof}

For any $\theta\in\R$, we   consider the stabilization of system $(A+\mu, P_p \theta )$
that is associated with the following
system:
 \begin{equation}\label{20209101452}
\left\{\begin{array}{l}
 z_{t}(x,t)= \Delta z (x,t)+\mu z(x,t)+(P_p \theta)(x) u(t),\ \   x\in \Omega, \ t> 0,\crr
 z (x,t)=0,\ \ \ x\in\Gamma_0,\ \ \disp \frac{\partial z(x,t)}{\partial \nu}=0,\ \ x\in\Gamma_1,
\end{array}\right.
\end{equation}
where $p\in L^2(\Gamma_1)$ satisfies \dref{20201091629} and $u$ is a scalar control.
Since  $\{\phi_j(\cdot)\}_{j=1}^{\infty}$   defined
by Assumption \ref{Assum202010121559}
 forms an orthonormal
basis for $L^2(\Omega)$,
the function $P_p \theta $ and the solution  $z(\cdot,t)$ of \dref{20209101452}
can be represented  respectively  as
\begin{equation}\label{20209101520}\left.\begin{array}{l}
\disp P_p \theta =\sum\limits_{k=1}^{\infty}f_k \phi_k ,\ \ f_k =\displaystyle \int_{ \Omega } (P_p \theta )(x)\phi_k(x)dx, \ \ k=1,2,\cdots
\end{array}\right.\end{equation}
and
 \begin{equation}\label{2020881609}\left.\begin{array}{l}
\disp  z (\cdot,t)=\sum\limits_{k=1}^{\infty}z_k(t)\phi_k(\cdot),  \ \ z_k(t)=\displaystyle \int_{ \Omega } z (x,t)\phi_k(x)dx,\ \  k=1,2,\cdots.
\end{array}\right.\end{equation}



Inspired by   \cite{CoronTrelat2004SICON,PrieurandTrelat2019TAC} and similarly to \cite{FPart1},
system \dref{20209101452} can be   stabilized by the finite-dimensional spectral
truncation technique.
Actually, by   a simple computation, it follows   that
\begin{equation}\label{wxh201912305} \begin{array}{rl}
\dot{z}_k(t)=&\displaystyle \int_{ \Omega } z _t(x,t)\phi_k(x)dx
=\displaystyle \int_{ \Omega }\left[\Delta z  (x,t)+  \mu z (x,t)+(P_p \theta)(x)u(t)\right]\phi_k(x)dx\crr
=&( \lambda_k+  \mu )z_k(t)+f_ku(t).
\end{array} \end{equation}
Since   $z_k(t)$ is stable for all $k>N$,  where $N$ is given by \dref{201912301959},  it is therefore sufficient   to consider
$z_k(t)$  for  $k\leq N$, which satisfy the following finite-dimensional system:
\begin{equation}\label{201912302011}
\dot{Z}_N(t)=\Lambda_NZ_N(t)+F_Nu(t),\ \ Z_N(t)=(z_1(t),\cdots, z_{N}(t))^\top,
\end{equation}
where $\Lambda_N$  and $F_N$ are defined by
\begin{equation}\label{2020871455}\left\{\begin{array}{l}
\Lambda_N={\rm diag} ( \lambda_1+ \mu,\cdots, \lambda_{N}+ \mu) ,\crr
F_N=\left(f_1,f_2,\cdots,f_{N} \right)^\top .
\end{array}\right.\end{equation}
In this way, the stabilization of system \dref{20209101452} amounts to stabilizing
the finite-dimensional system
\dref{201912302011}.

\begin{lemma}\label{Th201912302046}
In addition to  Assumption \ref{Assum202010121559}, suppose that
$p\in L^2(\Gamma_1)$ satisfies \dref{20201091629} and suppose that $\theta\in\R$ satisfies \dref{2020101311039}.
Then, there exists an $L_N=(l_1,l_2,\cdots,l_N)\in \mathcal{L}(\R^N,\R)$ such that  $\Lambda_N{+}F_NL_N$
 is Hurwitz, where $\Lambda_N$  and $F_N$ are defined by
\dref{2020871455}.
Moreover,
 the operator $ A+{\mu}+(P_{p} \theta)  K  $ generates an exponentially stable $C_0$-semigroup
 on $L^2(\Omega)$,
where $P_p\theta$ is given by \dref{202010121608} and
$ K  $ is  given by
 \begin{equation}\label{2020881654}
 K : g\to   \int_{\Omega} g(x) \left[\sum_{k=1}^{N}l_k\phi_k(x)\right] dx,\ \ \forall\ g\in L^2(\Omega).
\end{equation}

\end{lemma}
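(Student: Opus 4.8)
The plan is to handle the two assertions in turn: the finite-dimensional pole-placement claim by a Hautus-type argument, and the operator claim by exploiting a block-triangular decomposition of $A+\mu+(P_p\theta)K$ adapted to the splitting into low and high frequencies. First I would verify that the pair $(\Lambda_N,F_N)$ is controllable. By \dref{202010101700} the diagonal entries $\lambda_1+\mu,\dots,\lambda_N+\mu$ of $\Lambda_N$ are pairwise distinct, and by Lemma~\ref{lm202010131038} together with \dref{20209101520} every component $f_k=\langle P_p\theta,\phi_k\rangle_{L^2(\Omega)}$ of $F_N$ is nonzero. Hence for every $\lambda\in\C$ the matrix $[\lambda I_N-\Lambda_N,\;F_N]$ has rank $N$: if $\lambda\notin\{\lambda_k+\mu:k\le N\}$ this is immediate, and if $\lambda=\lambda_j+\mu$ then the $j$-th row of $\lambda I_N-\Lambda_N$ vanishes while the $j$-th entry $f_j$ of $F_N$ does not. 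So $(\Lambda_N,F_N)$ is controllable, and the classical single-input pole-assignment theorem yields $L_N=(l_1,\dots,l_N)$ with $\Lambda_N+F_NL_N$ Hurwitz.

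For the operator claim, I would first note that $(P_p\theta)K$ is a bounded (indeed rank-one) operator on $L^2(\Omega)$, so $\mathcal{A}:=A+\mu+(P_p\theta)K$ generates a $C_0$-semigroup (in fact analytic, since $A$ does). The structural point is this: writing $X_u=\mathrm{span}\{\phi_1,\dots,\phi_N\}$ and $X_s=\overline{\mathrm{span}}\{\phi_k:k>N\}$, the functional $Kg=\sum_{k=1}^N l_k\langle g,\phi_k\rangle$ vanishes on $X_s$, and since $A$ is diagonal in the basis $\{\phi_k\}$ this makes $X_s$ an $\mathcal{A}$-invariant subspace on which $\mathcal{A}$ reduces to $(A+\mu)|_{X_s}$, whose spectrum $\{\lambda_k+\mu:k>N\}$ lies in $(-\infty,\lambda_{N+1}+\mu]$ with $\lambda_{N+1}+\mu<0$ by \dref{201912301959}. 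Consequently, with respect to $L^2(\Omega)=X_u\oplus X_s$, the operator $\mathcal{A}$ is block lower-triangular: the $(1,1)$-block is, after identifying $X_u\cong\R^N$ via $\{\phi_k\}_{k=1}^N$, exactly $\Lambda_N+F_NL_N$ (because $(P_p\theta)(Kg)$ projected onto $X_u$ contributes $F_NL_N$ in coordinates), the $(2,2)$-block is $(A+\mu)|_{X_s}$, and the remaining off-diagonal coupling $C:X_u\to X_s$, $Cg=(Kg)\sum_{k>N}f_k\phi_k$, is bounded.

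It then remains to read off exponential stability from this structure. In block form $e^{\mathcal{A}t}$ has diagonal entries $e^{(\Lambda_N+F_NL_N)t}$ and $e^{(A+\mu)|_{X_s}t}$, both exponentially decaying (the first since $\Lambda_N+F_NL_N$ is Hurwitz by Step~1, the second with rate $-(\lambda_{N+1}+\mu)>0$), and off-diagonal entry $\int_0^t e^{(A+\mu)|_{X_s}(t-\tau)}C\,e^{(\Lambda_N+F_NL_N)\tau}\,d\tau$, a convolution of two exponentially decaying families through the bounded operator $C$, hence also exponentially decaying; so $\|e^{\mathcal{A}t}\|\le Me^{-\omega t}$ for suitable $M,\omega>0$. (Alternatively, analyticity reduces the claim to $\sup\{\mathrm{Re}\,s:s\in\sigma(\mathcal{A})\}<0$, and the triangular form gives $\sigma(\mathcal{A})=\sigma(\Lambda_N+F_NL_N)\cup\{\lambda_k+\mu:k>N\}$.) The routine parts are the Hautus rank check and the convolution estimate; the one place to be careful is that $(P_p\theta)K$ is \emph{not} a perturbation supported in $X_u$, since $P_p\theta$ generically has nonzero components along every $\phi_k$, so $X_u$ is not invariant and the structure is genuinely triangular rather than diagonal --- recognizing instead the invariance of $X_s$ with a bounded triangular coupling is what lets the problem collapse onto the Hurwitz matrix $\Lambda_N+F_NL_N$ together with the already-stable high-frequency tail.
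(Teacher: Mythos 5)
Your proof is correct, and it takes a genuinely different route from the paper's on both halves. For the controllability of $(\Lambda_N,F_N)$ you use the Hautus test (distinct diagonal entries of $\Lambda_N$ plus $f_k\neq 0$ from Lemma~\ref{lm202010131038}), whereas the paper invokes its Appendix Lemma~\ref{Lm202010131720}, i.e.\ the Kalman rank condition via a Vandermonde determinant; these are equivalent and equally routine. For the stability of $A+\mu+(P_p\theta)K$, the paper argues directly on the characteristic equation $(A+\mu+(P_p\theta)K)g=\lambda g$, splitting into $g\in\mathrm{Span}\{\phi_1,\dots,\phi_N\}$ and $g\notin\mathrm{Span}\{\phi_1,\dots,\phi_N\}$, while you instead observe that $K$ annihilates $X_s=\overline{\mathrm{span}}\{\phi_k:k>N\}$, so $X_s$ is invariant and the operator is block lower-triangular with $(1,1)$-block $\Lambda_N+F_NL_N$, $(2,2)$-block $(A+\mu)|_{X_s}$, and a bounded rank-one coupling; exponential stability then follows from the variation-of-constants formula (or spectrum-determined growth for the analytic semigroup). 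Your route is slightly longer but buys something real: your closing remark that the perturbation is \emph{not} supported in $X_u$ and that the structure is triangular rather than diagonal is exactly the point at which the paper's second case is delicate --- in \dref{wxh2020322257} the paper pairs the eigenvalue equation with $\phi_{j_0}$, $j_0>N$, and the term $\langle (P_p\theta)Kg,\phi_{j_0}\rangle=f_{j_0}\,Kg$ is silently dropped even though $f_{j_0}$ need not vanish; the correct repair is either to note that pairing with $\phi_1,\dots,\phi_N$ already forces $\lambda\in\sigma(\Lambda_N+F_NL_N)$ whenever $(g_1,\dots,g_N)\neq 0$, and that otherwise $Kg=0$, or to argue via the invariant subspace as you do. One small thing to make fully explicit if you write this up: when the decay rates of the two diagonal blocks coincide, the convolution term decays like $te^{-\omega t}$, so the final estimate should be stated with any rate strictly smaller than $\min\{-\sup\mathrm{Re}\,\sigma(\Lambda_N+F_NL_N),\,-(\lambda_{N+1}+\mu)\}$.
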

\begin{proof}
 Owing to
  \dref{20201091629}, it follows from
  Lemma \ref{lm202010131038}  that \dref{2020101310407} holds. By Lemma \ref{Lm202010131720} in Appendix,
  the pair $(\Lambda_N,F_N)$ is controllable and hence,
  there exists a vector $L_N=(l_1,l_2,\cdots,l_N) $ such that
  $\Lambda_N+F_NL_N$ is Hurwitz.

Since $A+{\mu}$ generates   an analytic semigroup $e^{(A+{\mu})t}$  on $L^2(\Omega)$ and
$ (P_p\theta)  K  \in\mathcal{L}(L^2(\Omega))$, it follows from \cite[Corollary 2.3, p.81]{Pazy1983Book} that
 $  A+{\mu}+( P_p\theta)  K  $  also  generates an analytic semigroup on $L^2(\Omega)$.
As a result, the proof will be accomplished if we can show that  $\sigma( A+{\mu}+ (P_p\theta)  K  )\subset \{ s\ | \ {\rm Re}(s)<0\}$. For any $\lambda\in \sigma( A+{\mu}+ (P_p\theta)  K  )$, we consider the characteristic equation
$ (A+{\mu}+ (P_p\theta)  K ) g=\lambda g $ with $g\neq 0 $.

When $g\in  {\rm Span}\{\phi_1,\phi_2,\cdots,\phi_N\} $,  there exist $g_1,g_2,\cdots,g_N\in \R$ such that  $g=\sum_{j=1}^{N}g_j\phi_j$. The   characteristic equation
becomes
 \begin{equation}\label{201912312136}
 \sum_{j=1}^Ng_j (A+{\mu}) \phi_j + P_p{\theta}\sum_{j=1}^N g_j K \phi_j= \sum_{j=1}^N\lambda g_j\phi_j.
\end{equation}
Since $(A+{\mu})\phi_j=( \lambda_j+  \mu)\phi_j$ and
 \begin{equation}\label{201912312139}
   K \phi_j=    \int_{0}^{1} \phi_j(x)   \left[\sum_{k=1}^{N}l_k\phi_k(x)\right]dx
  =l_j  ,\ \ j=1,2,\cdots,N,
\end{equation}
the equation \dref{201912312136} takes the  form
\begin{equation}\label{201912312141}
 \sum_{j=1}^Ng_j( \lambda_j+ \mu)\phi_j + P_p\theta\sum_{j=1}^N g_jl_j= \sum_{j=1}^N\lambda g_j\phi_j.
\end{equation}
Take the  inner product with $\phi_k $, $k=1,2,\cdots,N$ on equation \dref{201912312141}
 to obtain
\begin{equation}\label{wxh201912312146}
 g_k( \lambda_k+ \mu)  +  f_k\sum_{j=1}^N g_jl_j=  \lambda g_k,\ \ k=1,2,\cdots,N,
\end{equation}
which, together with \dref{2020871455}, leads to
\begin{equation}\label{201912312148}
 (\lambda -\Lambda_N-F_NL_N) \begin{pmatrix}
                       g_1\\g_2\\\vdots\\g_N
                     \end{pmatrix} =0.
\end{equation}
Since $(g_1,g_2,\cdots,g_N)\neq 0$, we have
\begin{equation}\label{wxh201912312149}
{\rm Det}(\lambda -\Lambda_N-F_NL_N) =0.
\end{equation}
 Hence, $\lambda\in \sigma(\Lambda_N{+}F_NL_N) \subset\{ s\ | \ {\rm Re} (s) <0\}$, since $\Lambda_N{+}F_NL_N$
 is Hurwitz.

  When   $g\notin {\rm Span}\{\phi_1,\phi_2,\cdots,\phi_N\}$,   there exists a $j_0>N$ such  that $\displaystyle \int_{0}^{1}g(x)\phi_{j_0}(x)dx\neq 0$.
Take the  inner product with $\phi_{j_0}$ on equation $ (A+{\mu}+ P_p\theta  K ) g=\lambda g$
 to get \begin{equation}\label{wxh2020322257}
( \lambda_{j_0}+\mu) \int_{0}^{1}g(x)\phi_{j_0}(x)dx=  \lambda \int_{0}^{1}g(x)\phi_{j_0}(x)dx,
\end{equation}
which implies that   $ \lambda = \lambda_{j_0}+\mu<0$. Therefore,  $\lambda\in \sigma( A+{\mu}+ (P_p\theta)  K  )\subset \{ s\ | \ {\rm Re}(s)<0\}$.
 The proof is  complete.
 \end{proof}

\section{State feedback}\label{Se.3}
This section is devoted to the stabilization of    system \dref{2020981903}.
Inspired by \cite{FPart1}, we consider the following dynamics feedback:
 \begin{equation}\label{2020981950}
 \left\{\begin{array}{l}
u(x,t)= v(x,  t),\ \  x\in \Gamma_1,\crr
\disp
{v}_t (\cdot, t)=-\alpha v(\cdot,t)+ B_v  u_v(t) \ \ \mbox{in}\ \ L^2(\Gamma_1),
\end{array}\right.t\geq0,
\end{equation}
where $\alpha >0$  is a tuning parameter,  $u_v(t)\in \R$  is a     new  scalar    control to be designed  and
 the operator $B_v\in\mathcal{L}(\R,L^2(\Gamma_1)) $    is given by
 \begin{equation}\label{20201091632}
B_v c= c p(\cdot),\ \ \forall\ \ c\in \R,
\end{equation}
with $p\in L^2(\Gamma_1)$ satisfying \dref{20201091629}.
  Under the controller \dref{2020981950}, the control plant
      \dref{2021162137}, or equivalently  \dref{2020981903},   turns to be
\begin{equation}\label{2020910729}
 \left\{\begin{array}{l}
\disp   \dot{w}(\cdot,t)= (\tilde{A}+\mu ) w(\cdot,t)+B v(\cdot,t) \quad\mbox{in}\quad H_{\Gamma_0}^{-1}(\Omega), \crr
\disp  {v}_t (\cdot, t)=-\alpha v(\cdot,t)+ B_v  u_v(t)   \quad\mbox{in}\quad L^2(\Gamma_1).
 \end{array}\right.
\end{equation}
Since \dref{2020910729} is a cascade system, the ``$v$-part" can be regarded as the   actuator dynamics
of the control plant $w$-system.
As a result, we can stabilize system \dref{2020910729} by the newly developed actuator dynamics compensation  approach in  \cite{FPart1}.
To  demonstrate the key  idea of controller  design clearly, we first consider the following  finite-dimensional example.
\begin{example}\label{Ex202010131134}
 Consider the following system in the state space $\R^n\times \R$:
\begin{equation}\label{202010131135}
\left\{\begin{array}{l}
 \dot{x}_1(t)=A  x_1( t)+B  x_2(t),\crr
   \dot{x}_2(t)=-\alpha x_2( t)+ B_2u(t),
     \end{array}\right.\ \ \alpha>0,
\end{equation}
where $A \in  \R^{n\times n} $, $B  \in   \R^{n}   $,
    $B_2\in  \R  $  and   $u(t)$ is the control.
By \cite{FPart1},
if we choose $S$ specially such that
   \begin{equation} \label{20191024842Ad1013}
A S+\alpha S =B  ,
\end{equation}
then
 system \dref{202010131135} can be  decoupled   by  the
block-upper-triangular transformation:
\begin{equation} \label{20191024837Ad1013}
\begin{pmatrix}
I_n&S\\
0&1
\end{pmatrix}
\begin{pmatrix}
A &B  \\
0&-\alpha
\end{pmatrix}\begin{pmatrix}
I_n& S\\
0&1
\end{pmatrix}^{-1}
=
\begin{pmatrix}
A &0  \\
0&-\alpha
\end{pmatrix}.
\end{equation}
Hence,
  the controllability of the
following pairs is  equivalent:
\begin{equation} \label{201910171043Ad1013}
\left(\begin{pmatrix}
A & B   \\
0&-\alpha
\end{pmatrix},\
 \begin{pmatrix}
0\\
B_2
\end{pmatrix}\right)\ {\rm  and }\
\left(\begin{pmatrix}
A&0\\
0&-\alpha
\end{pmatrix},\
\begin{pmatrix}
SB_2 \\
B_2
\end{pmatrix}
\right).
\end{equation}
Owing to the block-diagonal structure, the stabilization of the second  system of \dref{201910171043Ad1013} is
much   easier than the first one. As a consequence of this fact,
  the controller  $u (t)$ in \dref{202010131135} can be designed  by   stabilizing  system $(A ,SB_2)$:
  \begin{equation} \label{201910241002Ad1013}
u (t)=(K ,0)\begin{pmatrix}
I_n&S\\
0&1
\end{pmatrix}
\begin{pmatrix}
x_1(t)\\
x_2(t)
\end{pmatrix}
=K Sx_2(t)+K x_1(t),
\end{equation}
  where
$K \in \R^{1\times n}$   is chosen to make  $A +SB_2K $ Hurwitz.
 Under the feedback \dref{201910241002Ad1013}, we obtain
  the closed-loop
 of system \dref{202010131135}:
 \begin{equation} \label{202010131456}
 \left\{\begin{array}{l}
\disp \dot{ x}_1  (t) = A  x_1(t)+B  x_2(t),\crr
\disp  \dot{x}_2(t) =   ( B_2 K S-\alpha)x_2(t) +B_2 K {x}_1(t),
\end{array}\right.
\end{equation}
which is stable due to the Hurwitz matrix $A +SB _2K$ and the similarity
\begin{equation} \label{201910201037A1}
 \begin{pmatrix}
A & B   \\
 B_2K &B_2 K S-\alpha
\end{pmatrix}
 \sim
 \begin{pmatrix}
A +SB _2K & 0 \\
 B_2 K &  -\alpha
\end{pmatrix}.
\end{equation}
 To sum up, the feedback of system \dref{202010131135}   can be designed by  the following scheme: (i), solve the equation \dref{20191024842Ad1013}  to
get $S$; (ii), choose $K$  such that  $A +SB _2K$   is Hurwitz; (iii), let
$u(t)=K Sx_2(t)+K x_1(t)$.
\end{example}

Now, we return to the feedback design of system \dref{2020910729}. Inspired by Example \ref{Ex202010131134}, the controller can be designed as
 \begin{equation}\label{202010131506}
 u_v(t)=K  {w}(\cdot,t)+KSv(t),
 \end{equation}
where
$S\in \mathcal{L}( L^2(\Gamma_1), L^2(\Omega))$ solves the
 Sylvester equation  \begin{equation}\label{2020982157}
(\tilde{A}+\mu)S+\alpha S =B,
\end{equation}
and $K\in \mathcal{L}(L^2(\Omega),\R)$ stabilizes system $(A+\mu, SB_v)$ exponentially in the sense of \cite{Weiss1997TAC}.
\begin{lemma}\label{lm202010131514}
 Let $ A$ and $B$ be given by
   \dref{20209101543} and \dref{4.8}, respectively. Suppose that
$B_v\in\mathcal{L}(\R,L^2(\Gamma_1)) $   is given by \dref{20201091632}
with $p\in L^2(\Gamma_1)$ satisfying \dref{20201091629} and suppose that
 \begin{equation}\label{2020911847}
 \alpha+\mu\in \rho(-A).
 \end{equation}
 Then, the solution of Sylvester equation  \dref{2020982157} satisfies
   \begin{equation}\label{202010131545}
S g= - \varphi_g\in L^2(\Omega),\ \ \forall\ g\in L^2(\Gamma_1),
\end{equation}
 where  $\varphi_g$ is given by
 \begin{equation}\label{202010131549}
\left\{\begin{array}{l}
  \disp \Delta\varphi_g  =(-\alpha-\mu )\varphi_g  \ \  \mbox{in}\ \ \Omega,   \crr
\disp  \varphi_g (x )=0,\ x\in\Gamma_0,\ \ \frac{\partial \varphi_g (x )}{\partial \nu}=  g(x)  ,\ x\in\Gamma_1.
 \end{array}\right.
\end{equation}
 Moreover, for any $c\in\R$, we have
 \begin{equation}\label{202091091635}
S B_v c =-c P_p\theta \ \ \mbox{with}\ \ \theta=-\alpha-\mu,
\end{equation}
 where $P_p :\R\to L^2(\Omega)$ is given by \dref{202010121608}.
 \end{lemma}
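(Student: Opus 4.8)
The plan is to exploit the scalar nature of the ``$v$-dynamics'' generator: since the second component of the cascade \dref{2020910729} is driven by the scalar $-\alpha$, the Sylvester equation \dref{2020982157} is really the single operator equation $(\tilde A+\mu+\alpha)S=B$ in $\mathcal L(L^2(\Gamma_1),H^{-1}_{\Gamma_0}(\Omega))$. By hypothesis \dref{2020911847}, $-(\alpha+\mu)\in\rho(A)$, and a short Fredholm argument then shows that the extension $\tilde A+\mu+\alpha$ is a topological isomorphism from $H^1_{\Gamma_0}(\Omega)$ onto $H^{-1}_{\Gamma_0}(\Omega)$; hence \dref{2020982157} has a unique solution $S$, and it suffices to verify that the map $g\mapsto-\varphi_g$, with $\varphi_g$ given by \dref{202010131549}, is that solution. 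Here I read \dref{202010131549} in the variational sense: $\varphi_g\in H^1_{\Gamma_0}(\Omega)$ and $\int_\Omega\nabla\varphi_g\cdot\nabla z-(\alpha+\mu)\int_\Omega\varphi_g z=\int_{\Gamma_1}g z$ for all $z\in H^1_{\Gamma_0}(\Omega)$, whose unique solvability again follows from the Fredholm alternative together with $-(\alpha+\mu)\notin\sigma(A)$.

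To identify $S$ I pair $(\tilde A+\mu+\alpha)(-\varphi_g)$ against an arbitrary test function $z\in H^1_{\Gamma_0}(\Omega)$. Recalling that $D((-A)^{1/2})=H^1_{\Gamma_0}(\Omega)$ and that $\langle(-A)^{1/2}x,(-A)^{1/2}z\rangle_{L^2(\Omega)}=\int_\Omega\nabla x\cdot\nabla z$ for $x,z\in H^1_{\Gamma_0}(\Omega)$ (which one obtains from \dref{20209101543} via Green's formula on $D(A)$ and density), the definition \dref{2020981918} yields $\langle\tilde A(-\varphi_g),z\rangle=\int_\Omega\nabla\varphi_g\cdot\nabla z$. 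Consequently, for every such $z$,
\[
\langle(\tilde A+\mu+\alpha)(-\varphi_g),z\rangle=\int_\Omega\nabla\varphi_g\cdot\nabla z-(\alpha+\mu)\int_\Omega\varphi_g z=\int_{\Gamma_1}g z,
\]
the last equality being precisely the variational form of \dref{202010131549}. On the other hand, \dref{20201012909}--\dref{2020910949} (i.e.\ $B^\ast f=f|_{\Gamma_1}$, hence by duality $\langle Bg,z\rangle_{H^{-1}_{\Gamma_0}(\Omega),H^1_{\Gamma_0}(\Omega)}=\int_{\Gamma_1}g z$ for $z\in H^1_{\Gamma_0}(\Omega)$) identify the right-hand side with $\langle Bg,z\rangle$. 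Since $z$ is arbitrary, $(\tilde A+\mu+\alpha)(-\varphi_g)=Bg$, that is $(\tilde A+\mu)S+\alpha S=B$ with $Sg=-\varphi_g$, which is \dref{202010131545}.

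For \dref{202091091635} I simply substitute $g=B_v c=c\,p$ into \dref{202010131545}: by linearity $S B_v c=-c\,\varphi_p$. With $\theta=-\alpha-\mu$, the boundary value problem \dref{202010131549} satisfied by $\varphi_p$ coincides verbatim with the problem \dref{202010121609} defining $\zeta_p=P_p\theta$; moreover $-(\alpha+\mu)\in\rho(A)$ forces $\theta\neq\lambda_j$ for every $j$, so in particular \dref{2020101311039} holds and $P_p\theta$ is well defined. Hence $\varphi_p=P_p\theta$ and $S B_v c=-c\,P_p\theta$, as claimed.

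The one place that genuinely requires care is the boundary bookkeeping in the identification step. Because $\varphi_g\notin D(A)$ — its Neumann trace on $\Gamma_1$ equals $g\neq0$ — the distribution $\tilde A\varphi_g$ is strictly larger than $\Delta\varphi_g$, and the crux is that, once paired with test functions from $H^1_{\Gamma_0}(\Omega)$, the defect between them is exactly $-Bg$. Keeping straight which pairing (the $L^2(\Omega)$ inner product, the $L^2(\Gamma_1)$ inner product, or the $H^{\pm1}_{\Gamma_0}(\Omega)$ duality) each term lives in, and checking that the membership $\varphi_g\in H^1_{\Gamma_0}(\Omega)$ already suffices to justify each integration by parts (no $H^{3/2}$ elliptic regularity is needed), is the main obstacle; everything else is routine.
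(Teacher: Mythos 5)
Your proof is correct, and its skeleton --- invert $\alpha+\mu+\tilde A$ as a map from $H^1_{\Gamma_0}(\Omega)$ onto $H^{-1}_{\Gamma_0}(\Omega)$, verify that $-\varphi_g$ is sent to $Bg$, then substitute $g=cp$ and match \dref{202010131549} with \dref{202010121609} --- is the same as the paper's. The difference is in how the identity $(\alpha+\mu+\tilde A)(-\varphi_g)=Bg$ is verified. The paper splits $\varphi_g=(\varphi_g-\Upsilon g)+\Upsilon g$ with $\Upsilon$ the Neumann map \dref{4.5}: the first summand has vanishing Neumann trace on $\Gamma_1$, hence lies in $D(A)$ where $\tilde A$ acts as $\Delta$ and the interior equation cancels the $(\alpha+\mu)\varphi_g$ term, while the second summand contributes exactly $\tilde A\Upsilon g=-Bg$ by the definition \dref{4.8}. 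You instead test against $z\in H^1_{\Gamma_0}(\Omega)$, use $\langle\tilde A x,z\rangle=-\int_\Omega\nabla x\cdot\nabla z$ (which does follow from \dref{2020981918} by the Green's-formula-plus-density argument you sketch) together with the weak form of \dref{202010131549}, and identify the boundary term $\int_{\Gamma_1}gz$ with $\langle Bg,z\rangle$ through $B^*z=z|_{\Gamma_1}$. The two computations are essentially dual to one another: the paper's is shorter because it reuses $\Upsilon$ and $B=-\tilde A\Upsilon$ already set up in Section 1, whereas yours is self-contained at the variational level and, as a side benefit, records two points the paper leaves implicit --- that only $\varphi_g\in H^1_{\Gamma_0}(\Omega)$ is needed (no $H^{3/2}$ regularity), and that the boundary-value problem \dref{202010131549} is uniquely solvable precisely because $-(\alpha+\mu)\in\rho(A)$. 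Your treatment of \dref{202091091635}, including the observation that $\alpha+\mu\in\rho(-A)$ guarantees $\theta=-\alpha-\mu\neq\lambda_j$ so that $P_p\theta$ is well defined, coincides with the paper's.
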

\begin{proof}
Owing to \dref{2020911847}, we solve   \dref{2020982157} to get
  \begin{equation}\label{2020910832}
S = (\alpha +\mu +\tilde{A})^{-1} B.
\end{equation}
By a straightforward computation, it follows that
\begin{equation}\label{202010131621}
 \left.\begin{array}{ll}
\disp (\alpha +\mu +\tilde{A}) \varphi_g   &\disp=(\alpha +\mu +\tilde{A})  \varphi_g  - \tilde{A}\Upsilon   g+ \tilde{A}\Upsilon   g\crr
&\disp =
(\alpha +\mu )\varphi_g  +\tilde{A}(\varphi_g  -  \Upsilon   g)+ \tilde{A}\Upsilon  g
\crr
&\disp= (\alpha +\mu )\varphi_g  +\Delta(\varphi_g  -  \Upsilon   g )+ \tilde{A}\Upsilon   g\crr
&\disp =  \tilde{A}  \Upsilon   g=-B  g ,
 \end{array}\right.
\end{equation}
which, together with \dref{2020910832}, leads to \dref{202010131545} easily.

 By \dref{20201091632} and \dref{202010131545}, we have $SB_vc= -c \vartheta$, where
\begin{equation}\label{202010131632}
\left\{\begin{array}{l}
  \disp \Delta \vartheta   =(-\alpha-\mu ) \vartheta   \ \  \mbox{in}\ \ \Omega,    \crr
\disp  \vartheta  (x )=0,\ x\in\Gamma_0,\ \ \frac{\partial \vartheta (x )}{\partial \nu}=   p(x)  ,\ x\in\Gamma_1.
 \end{array}\right.
\end{equation}
In view of   \dref{202010121609} and letting $\theta=-\alpha-\mu$,
we can obtain  \dref{202091091635} easily.
The proof is complete.
\end{proof}

By Lemmas \ref{Th201912302046} and \ref{lm202010131514},
the operator  $-K\in \mathcal{L}(L^2(\Omega),\R)$  defined by  \dref{2020881654}
 stabilizes system $(A+\mu, SB_v)$ exponentially. As a result, the controller \dref{202010131506} turns to be
  \begin{equation}\label{202010131538}
 u_v(t)=-\int_{\Omega}[ w(x,t)- \varphi_v(x,t)] \left[\sum_{k=1}^{N}l_k\phi_k(x)\right] dx,
\end{equation}
  where
  \begin{equation}\label{202010131542}
\left\{\begin{array}{l}
  \disp \Delta \varphi_v(\cdot,t)  =(-\alpha-\mu )\varphi_v(\cdot,t)  \ \ \mbox{in}\ \ \Omega,  \crr
\disp  \varphi_v (x,t)=0,\ x\in\Gamma_0,\ \ \frac{\partial \varphi_v (x,t)}{\partial \nu}=  v(x,t)  ,\ x\in\Gamma_1.
 \end{array}\right.
\end{equation}
By \dref{202010131538} and \dref{2020910729}, we obtain the closed-loop system
\begin{equation}\label{202010131646}
\left\{\begin{array}{l}
\disp \dot{w}(\cdot,t)= (\tilde{A}+\mu ) w(\cdot,t)+B v(\cdot,t) \quad\mbox{in}\quad \Omega , \crr
\disp  {v}_t (\cdot, t)=-\alpha{v}(\cdot,t)- B_v
 \int_{\Omega} [{w}(x,t) -\varphi_v(x,t)] \left[\sum_{k=1}^{N}l_k\phi_k(x)\right] dx\ \ \mbox{in}\ \ \Gamma_1,\crr
 \disp \Delta \varphi_v(\cdot,t)  =(-\alpha-\mu )\varphi_v(\cdot,t) \ \ \mbox{in}\ \ \Omega, \crr
\disp  \varphi_v (x,t)=0,\ x\in\Gamma_0,\ \ \frac{\partial \varphi_v (x,t)}{\partial \nu}=  v(x,t)  ,\ x\in\Gamma_1.
 \end{array}\right.
\end{equation}
Combining \dref{20201091632},   \dref{4.8}  and \dref{20209101543}, system \dref{202010131646} turns to be
\begin{equation}\label{20201092100}
\left\{\begin{array}{l}
 w_{t}(x,t)= \Delta w (x,t)+\mu w(x,t),\ \   x\in \Omega,  \crr
 w (x,t)=0,\ \ \ x\in\Gamma_0,\ \ \disp \frac{\partial w(x,t)}{\partial \nu}=v(x,t),\ \ x\in\Gamma_1,\crr
\disp  {v}_t (\cdot, t)=-\alpha{v}(\cdot,t)- p(\cdot)
 \int_{\Omega} [{w}(x,t) -\varphi_v(x,t)] \left[\sum_{k=1}^{N}l_k\phi_k(x)\right] dx\ \ \mbox{in}\ \ \Gamma_1,\crr
 \disp \Delta \varphi_v(\cdot,t)  =(-\alpha-\mu )\varphi_v(\cdot,t)  \ \ \mbox{in}\ \ \Omega,\crr
\disp  \varphi_v (x,t)=0,\ x\in\Gamma_0,\ \ \frac{\partial \varphi_v (x,t)}{\partial \nu}=  v(x,t)  ,\ x\in\Gamma_1.
 \end{array}\right.
\end{equation}

 \begin{theorem}\label{Th20201092120}
In addition to Assumption  \ref{Assum202010121559}, suppose that
$p\in L^2(\Gamma_1)$ satisfies
\dref{20201091629} and
\begin{equation}\label{202010131658}
 \alpha +\mu + \lambda_j \neq 0,\ \ j=1,2,\cdots, N.
\end{equation}
Then, there exists an $L_N=(l_1,l_2,\cdots,l_N)\in \mathcal{L}(\R^N,\R)$ such that  $\Lambda_N{+}F_NL_N$
 is Hurwitz, where $\Lambda_N$  and $F_N$ are defined by
\dref{2020871455}. Moreover,
    for any $(w(\cdot,0),v(\cdot,0))^{\top} \in L^2(\Omega)\times L^2(\Gamma_1)$,   system \dref{20201092100} admits a unique solution
 $ (w,v)^{\top}\in C([0,\infty);L^2(\Omega)\times L^2(\Gamma_1)) $
that decays to zero exponentially in $L^2(\Omega)\times L^2(\Gamma_1)$  as $t\to\infty$. Moreover, if the initial state $(w(\cdot,0),v(\cdot,0))^{\top}  \in D(A)\times L^2(\Gamma_1)$, the solution
$ (w,v)^{\top}\in C^1([0,\infty);L^2(\Omega)\times L^2(\Gamma_1)) $ is    classical.
\end{theorem}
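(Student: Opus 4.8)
The plan is to realize the closed-loop system \dref{20201092100} as a single abstract evolution equation on the Hilbert space $\mathcal H:=L^2(\Omega)\times L^2(\Gamma_1)$ and to decouple it by the infinite-dimensional counterpart of the block-triangular transformation used in Example \ref{Ex202010131134}. First, since \dref{202010131658} is exactly condition \dref{2020101311039} with $\theta=-\alpha-\mu$, Lemma \ref{lm202010131038} applies and, by Lemma \ref{Th201912302046}, the pair $(\Lambda_N,F_N)$ is controllable; hence one may fix $L_N=(l_1,\dots,l_N)$ with $\Lambda_N+F_NL_N$ Hurwitz, which is the first assertion of the theorem and which determines the operator $K$ of \dref{2020881654}. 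Moreover \dref{202010131658} yields $\alpha+\mu\in\rho(-A)$, so Lemma \ref{lm202010131514} applies: the Sylvester equation \dref{2020982157} has the solution $S\in\mathcal L(L^2(\Gamma_1),L^2(\Omega))$, the auxiliary field in \dref{20201092100} is $\varphi_v=-Sv$, and $SB_vc=-c\,P_p\theta$ with $\theta=-\alpha-\mu$. Consequently the feedback \dref{202010131538} reads $u_v(t)=-K\big(w(\cdot,t)+Sv(\cdot,t)\big)$, and \dref{20201092100} becomes $\frac{d}{dt}(w,v)^{\top}=\mathcal A(w,v)^{\top}$ in $\mathcal H$, with
\[
\mathcal A\begin{pmatrix}w\\ v\end{pmatrix}=\begin{pmatrix}(\tilde A+\mu)w+Bv\\ -\alpha v-B_vK(w+Sv)\end{pmatrix},\qquad D(\mathcal A)=\big\{(w,v)^{\top}\in\mathcal H:\ (\tilde A+\mu)w+Bv\in L^2(\Omega)\big\}.
\]

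Next I introduce the bounded, boundedly invertible operator $\mathbb T\in\mathcal L(\mathcal H)$, $\mathbb T(w,v)^{\top}=(w+Sv,\ v)^{\top}$, with inverse $\mathbb T^{-1}(\hat w,v)^{\top}=(\hat w-Sv,\ v)^{\top}$. Using the Sylvester identity $(\tilde A+\mu)S+\alpha S=B$ together with $SB_vc=-c\,P_p\theta$, a direct computation — the operator analogue of \dref{20191024837Ad1013}--\dref{201910201037A1} — gives $\mathbb T\mathcal A\,\mathbb T^{-1}=\widetilde{\mathcal A}$, where
\[
\widetilde{\mathcal A}=\begin{pmatrix}A+\mu+(P_p\theta)K & 0\\ -B_vK & -\alpha\end{pmatrix},\qquad \theta=-\alpha-\mu,\qquad D(\widetilde{\mathcal A})=D(A)\times L^2(\Gamma_1),
\]
and both $(P_p\theta)K$ and $B_vK$ are now \emph{bounded}; thus the change of variables has absorbed the unbounded boundary operator $B$ and reduced the problem to a triangular perturbation of a block-diagonal operator.

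By Lemma \ref{Th201912302046}, $A+\mu+(P_p\theta)K$ generates an exponentially stable analytic $C_0$-semigroup on $L^2(\Omega)$, while $-\alpha$ (with $\alpha>0$) generates an exponentially stable semigroup on $L^2(\Gamma_1)$; hence the block-diagonal part of $\widetilde{\mathcal A}$ generates an exponentially stable analytic semigroup on $\mathcal H$, and the bounded strictly-lower-triangular perturbation $-B_vK$ leaves this property intact, so $\widetilde{\mathcal A}$ generates an analytic $C_0$-semigroup. Exponential stability then follows from the explicit solution $\hat w(t)=e^{(A+\mu+(P_p\theta)K)t}\hat w(0)$ and $v(t)=e^{-\alpha t}v(0)-\int_0^t e^{-\alpha(t-s)}B_vK\hat w(s)\,ds$, a convolution of two exponentially decaying factors, giving $\|e^{\widetilde{\mathcal A}t}\|_{\mathcal L(\mathcal H)}\le Me^{-\omega t}$ for some $M,\omega>0$; equivalently one invokes the spectrum-determined growth property of analytic semigroups and $\sigma(\widetilde{\mathcal A})=\sigma\big(A+\mu+(P_p\theta)K\big)\cup\{-\alpha\}\subset\{{\rm Re}\,s\le-\omega\}$. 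Since $\mathbb T$ is an isomorphism of $\mathcal H$, $\mathcal A=\mathbb T^{-1}\widetilde{\mathcal A}\,\mathbb T$ also generates an analytic $C_0$-semigroup $e^{\mathcal At}=\mathbb T^{-1}e^{\widetilde{\mathcal A}t}\mathbb T$ with the same exponential bound up to the constant $\|\mathbb T\|\,\|\mathbb T^{-1}\|$. Hence for every $(w(\cdot,0),v(\cdot,0))^{\top}\in\mathcal H$ there is a unique mild solution $(w,v)^{\top}\in C([0,\infty);\mathcal H)$ decaying to zero exponentially, and — because $\varphi_v:=-Sv$ automatically solves the elliptic problem in \dref{20201092100} and $B$ encodes the Neumann condition $\partial w/\partial\nu=v$ on $\Gamma_1$ — this abstract solution is the solution of \dref{20201092100}. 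Finally, the analyticity of $e^{\mathcal At}$ together with the standard regularity theory for analytic semigroups yields, for the regular initial data assumed in the theorem, a solution lying in $C^1([0,\infty);\mathcal H)$ and taking values in the domain, i.e. a classical solution.

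The only genuinely delicate point is to make the decoupling step rigorous at the level of unbounded operators and domains: one must check that $S=(\alpha+\mu+\tilde A)^{-1}B$ maps $L^2(\Gamma_1)$ boundedly into $L^2(\Omega)$ (so that $\mathbb T\in\mathcal L(\mathcal H)$), that $\mathbb T$ carries $D(\mathcal A)$ onto $D(A)\times L^2(\Gamma_1)$, and that the formal block computation leading to $\mathbb T\mathcal A\mathbb T^{-1}=\widetilde{\mathcal A}$ is legitimate; all of this rests on the elliptic solvability and regularity supplied by Lemma \ref{lm202010131514} and on the admissibility of the Neumann control operator $B$ for the analytic heat semigroup generated by $A$. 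Once this is in place, the well-posedness and the exponential decay for \dref{20201092100} are inherited, via the bounded similarity $\mathbb T$, from the essentially elementary triangular system governed by $\widetilde{\mathcal A}$.
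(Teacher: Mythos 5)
Your proposal is correct and follows essentially the same route as the paper: the closed-loop system is cast as $\frac{d}{dt}(w,v)^{\top}=\A(w,v)^{\top}$, decoupled by the Sylvester-based similarity $\mathbb S(f,g)^{\top}=(f+Sg,g)^{\top}$ into a block-triangular operator whose diagonal blocks are $A+\mu+(P_p\theta)K$ (exponentially stable by Lemma \ref{Th201912302046}, via $SB_v=-P_p\theta$ from Lemma \ref{lm202010131514}) and $-\alpha$, after which stability of $\A$ follows by similarity. The only differences are presentational — you spell out the variation-of-constants bound for the triangular system where the paper cites a lemma of \cite{FPart1}, and you flag the domain/admissibility checks explicitly — so no further comparison is needed.
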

\begin{proof}

Notice that  \dref{2020881654}, the closed-loop system  \dref{202010131646}
   can be written as the abstract form:
 \begin{equation}\label{202010121536}
 \frac{d}{dt}( {w} (\cdot,t), {v} (\cdot,t))^{\top}=  \A ( {w} (\cdot,t), {v} (\cdot,t))^{\top} ,
\end{equation}
where the operator $\A: D(\A)\subset L^2(\Omega)\times L^2(\Gamma_1)\to  L^2(\Omega)\times L^2(\Gamma_1)$ is defined by
\begin{equation}\label{202010121537}
 \A=\begin{pmatrix}
A+\mu &B\\
-B_vK&-B_vKS- \alpha
    \end{pmatrix} \ \ \mbox{with}\ \  D(\A)=  D(A)\times L^2(\Gamma_1).
 \end{equation}
 As proposed in \cite{FPart1} and similarly to \dref{20191024837Ad1013}, we
   introduce the following transformation:
  \begin{equation}\label{2020109938}
\mathbb{S}(f,g)^{\top}=(f+Sg,g)^{\top},\ \ (f,g)^{\top}\in L^2(\Omega)\times L^2(\Gamma_1),
\end{equation}
where
$S\in \mathcal{L}( L^2(\Gamma_1), L^2(\Omega))$ solves the
 Sylvester equation  \dref{2020982157}.
 By a simple computation,
  $\mathbb{S}\in \mathcal{L}(L^2(\Omega)\times L^2(\Gamma_1))$ is invertible and its inverse is
 \begin{equation}\label{2020109939}
\mathbb{S}^{-1}(f,g)^{\top}=(f-Sg,g)^{\top},\ \ (f,g)^{\top}\in L^2(\Omega)\times L^2(\Gamma_1) .
\end{equation}
 Moreover, (see, e.g., \cite[Theorem 5.1]{FPart1})
 \begin{equation}\label{202010121544}
 \mathbb{S}\A \mathbb{S}^{-1}=\A_\mathbb{S}, \ \ D(\A_\mathbb{S})= \mathbb{S}D(\A),
\end{equation}
where
\begin{equation}\label{202010101633}
\A_{\mathbb{S}}= \begin{pmatrix}
 {A}+\mu -SB_v   K & 0\\
B_v  K  &-\alpha
    \end{pmatrix},
\end{equation}
$SB_v$ is given by \dref{202091091635}
  and
$ K  $ is  given by \dref{2020881654}.
Since $SB_v=-P_{p} \theta  $ with $\theta=-\alpha-\mu$,
it follows from  Lemma \ref{Th201912302046} that
 the operator $ A+{\mu}+(P_{p} \theta)  K ={A}+\mu -SB_v   K $ generates an exponentially stable $C_0$-semigroup
 on $L^2(\Omega)$. Owing to the block-triangle structure and \cite[Lemma 3.2]{FPart1}, the operator  $\A_{\mathbb{S}}$
 generates an exponentially stable
 $C_0$-semigroup $e^{\A_{\mathbb{S}}t}$ on $L^2(\Omega)\times L^2(\Gamma_1)$.
As a result,
the operator $ \A  $ generates an exponentially stable $C_0$-semigroup
 on $L^2(\Omega)\times L^2(\Gamma_1)$ due to the similarity \dref{202010121544}.
   \end{proof}

\section{Preliminaries on observer design}\label{Se.4}
 This section is devoted to the preliminaries on the observer design.
 Let $q\in L^2(\Gamma_1)$ satisfy
\begin{equation}\label{20201091629q}
 \int_{\Gamma_1}q(x)\phi_j(x)\neq 0,\ \ j=1,2,\cdots, N,
\end{equation}
where $\phi_j$   is given by \dref{20209101524} and $N$ is an integer that satisfies \dref{201912301959}. For any $\gamma\in\R$, define the operator $J_q^{\gamma}: L^2(\Omega)\to\R$ by
 \begin{equation} \label{202010111141}
J_{q}^{\gamma} (  g)=-\int_{\Gamma_1}q(x)\xi_g(x)dx,\ \ \forall\ g\in L^2(\Omega),
\end{equation}
where $\xi_g$ is given by
 \begin{equation} \label{202010111149}
 \left\{\begin{array}{l}
\disp  \Delta  \xi_g=\gamma \xi_g+g\ \ \mbox{in}\ \ \Omega,\crr
\disp  \xi_g(x)=0,\ x\in\Gamma_0,\ \ \frac{\partial \xi_g(x)}{\partial \nu}=0,\ x\in \Gamma _1.
\end{array}\right.
\end{equation}

\begin{lemma}\label{lm202010111155}
Let  $\{(\phi_j, \lambda_j)\}_{j=1}^{\infty}$  be given by \dref{20209101524} and $N$ be an integer that satisfies \dref{201912301959}.
 Suppose that $q\in L^2(\Gamma_1)$ satisfies \dref{20201091629q} and suppose that $\gamma\in \R$ satisfies \begin{equation}\label{202010111209}
 \gamma \neq \lambda_j ,\ \ j=1,2,\cdots, N.
\end{equation}
   Then,
 the operator $J_{q}^{\gamma }$ defined by \dref{202010111141} satisfies
  \begin{equation}\label{202010111157}
J_{q}^{\gamma }(\phi_j)\neq 0,\ \ j=1,2,\cdots, N.
\end{equation}

\end{lemma}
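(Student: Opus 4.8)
The plan is to follow the template of Lemma~\ref{lm202010131038}. Fix $j\in\{1,2,\dots,N\}$ and take $g=\phi_j$ in the auxiliary problem \dref{202010111149}. The key observation is that $\phi_j$ already satisfies exactly the homogeneous boundary conditions occurring there, namely $\phi_j|_{\Gamma_0}=0$ and $\partial\phi_j/\partial\nu|_{\Gamma_1}=0$ (see \dref{20209101524}), and that $\Delta\phi_j=\lambda_j\phi_j$. Hence the resolvent equation $\Delta\xi=\gamma\xi+\phi_j$ subject to $\xi|_{\Gamma_0}=0$ and $\partial\xi/\partial\nu|_{\Gamma_1}=0$ is solved explicitly by
\[
\xi_{\phi_j}=\frac{1}{\lambda_j-\gamma}\,\phi_j,
\]
which is legitimate because $\gamma\neq\lambda_j$ by \dref{202010111209} and $(\Delta-\gamma)\phi_j=(\lambda_j-\gamma)\phi_j$. (If one prefers not to invoke uniqueness of the solution of \dref{202010111149}, the same identity drops out of expanding $\xi_{\phi_j}$ in the orthonormal basis $\{\phi_k\}_{k\ge1}$ and matching coefficients.)

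Inserting this into the definition \dref{202010111141} of $J_q^\gamma$ then gives
\[
J_q^\gamma(\phi_j)=-\int_{\Gamma_1}q(x)\,\xi_{\phi_j}(x)\,dx=\frac{1}{\gamma-\lambda_j}\int_{\Gamma_1}q(x)\phi_j(x)\,dx,
\]
and the right-hand side is nonzero because $\gamma-\lambda_j\neq0$ by \dref{202010111209} and $\int_{\Gamma_1}q\phi_j\,dx\neq0$ by \dref{20201091629q}. This is precisely \dref{202010111157}.

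A second, essentially equivalent route — which makes the parallel with Section~\ref{Se.2} most transparent — is to introduce the auxiliary function $\eta:=P_q\gamma$ in the sense of \dref{202010121608}--\dref{202010121609} (so $\Delta\eta=\gamma\eta$ in $\Omega$, $\eta|_{\Gamma_0}=0$, $\partial\eta/\partial\nu|_{\Gamma_1}=q$) and apply Green's second identity to the pair $(\xi_{\phi_j},\eta)$. Every boundary term vanishes: on $\Gamma_0$ both $\xi_{\phi_j}$ and $\eta$ are zero, while on $\Gamma_1$ one has $\partial\xi_{\phi_j}/\partial\nu=0$. What remains is $\int_\Omega\eta\phi_j\,dx=-\int_{\Gamma_1}q\xi_{\phi_j}\,dx=J_q^\gamma(\phi_j)$, i.e. $J_q^\gamma(\phi_j)=\langle P_q\gamma,\phi_j\rangle_{L^2(\Omega)}$. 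Since $q$ satisfies \dref{20201091629q}, the exact analogue of \dref{20201091629} with $p$ replaced by $q$, and $\gamma$ satisfies \dref{202010111209}, the analogue of \dref{2020101311039}, Lemma~\ref{lm202010131038} applied with $(p,\theta)=(q,\gamma)$ immediately yields $\langle P_q\gamma,\phi_j\rangle_{L^2(\Omega)}\neq0$, which is \dref{202010111157}.

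I do not expect a genuine obstacle here: this statement is just the observer-side mirror of Lemma~\ref{lm202010131038}. The only points needing care are the bookkeeping of the mixed Dirichlet/Neumann boundary contributions in Green's identity — verifying that they really do cancel under the boundary conditions of \dref{20209101524} and \dref{202010111149} — and, if one uses the first route, a one-line remark that \dref{202010111149} is uniquely solvable (which is implicit in the very definition of $J_q^\gamma$ used in the statement).
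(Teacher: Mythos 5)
Your proposal is correct, and your first route is genuinely more direct than the paper's argument. The paper never observes that $\xi_{\phi_j}$ can be written down in closed form; instead it introduces the auxiliary function $\eta_q$ (which is exactly your $P_q\gamma$) and performs two separate Green's-identity computations: one on the pair $(\eta_q,\xi_g)$ to establish $J_q^{\gamma}(\phi_j)=\langle\phi_j,\eta_q\rangle_{L^2(\Omega)}$, and a second on the pair $(\phi_j,\eta_q)$ to establish $\int_{\Gamma_1}q\phi_j\,dx=(\gamma-\lambda_j)\langle\eta_q,\phi_j\rangle_{L^2(\Omega)}$ — the latter being a verbatim re-derivation of the computation inside Lemma~\ref{lm202010131038} rather than a citation of it. Combining the two gives the same identity $J_q^{\gamma}(\phi_j)=\tfrac{1}{\gamma-\lambda_j}\int_{\Gamma_1}q\phi_j\,dx$ that your one-line substitution $\xi_{\phi_j}=\tfrac{1}{\lambda_j-\gamma}\phi_j$ produces immediately, since $\phi_j$ already satisfies the homogeneous boundary conditions of \dref{202010111149} and $(\Delta-\gamma)\phi_j=(\lambda_j-\gamma)\phi_j$. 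Your second route is essentially the paper's proof, tidied up by invoking Lemma~\ref{lm202010131038} with $(p,\theta)=(q,\gamma)$ instead of repeating its computation. What your first route buys is brevity and the explicit constant; what the paper's route buys is nothing extra here, though it does exhibit $J_q^{\gamma}$ as the adjoint-side mirror of $P_q$, which is conceptually pleasant. Your caveat about unique solvability of \dref{202010111149} (i.e.\ $\gamma\in\rho(A)$) is well taken, but that is a gap in the well-definedness of $J_q^{\gamma}$ in the paper itself, not in your argument: for $j\le N$ your explicit $\xi_{\phi_j}$ is in any case a solution, and it is the one selected by the basis expansion you mention.
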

\begin{proof}
  Let  $\eta_q$ be a solution of the following system
   \begin{equation} \label{202010111159}
 \left\{\begin{array}{l}
\disp  \Delta  \eta_q=\gamma\eta_q\ \ \mbox{in}\ \ \Omega,    \crr
\disp  \eta_q(x)=0,\ x\in\Gamma_0,\ \ \frac{\partial \eta_q(x)}{\partial \nu}=q(x),\ x\in \Gamma _1.
\end{array}\right.
\end{equation}
 Then, for any $g\in L^2(\Omega)$,  it follows from \dref{202010111149} and \dref{202010111159} that
 \begin{equation}\label{2020982014}
 \begin{array}{rl}
\disp \gamma \langle \eta_q,\xi_g\rangle_{ L^2(\Omega)}&\disp =\langle \Delta \eta_q,\xi_g\rangle_{ L^2(\Omega)}=\int_{\Gamma_1}\frac{\partial \eta_q(x)}{\partial\nu}\xi_g(x)dx-
\langle \nabla \eta_q,\nabla \xi_g\rangle_{ L^2(\Omega)}\crr
\disp &\disp =
\int_{\Gamma_1}q(x)\xi_g(x)dx+\gamma\langle \eta_q,\xi_g\rangle_{ L^2(\Omega)}+
\langle  g,\eta_q\rangle_{ L^2(\Omega)}
,
 \end{array}
\end{equation}
  which yields
  \begin{equation} \label{202010111240}
J_{q}^{\gamma}(\phi_j)=\langle  \phi_j,\eta_q\rangle_{ L^2(\Omega)},  \ \ j=1,2,\cdots,N.
\end{equation}
   On the other hand,
  \begin{equation}\label{202010111242}
 \begin{array}{rl}
\disp \lambda_j \langle  \phi_j,\eta_q\rangle_{ L^2(\Omega)}&\disp =\langle \Delta\phi_j, \eta_q \rangle_{ L^2(\Omega)}=-\langle \nabla\phi_j, \nabla\eta_q \rangle_{ L^2(\Omega)}\crr
\disp &\disp = - \int_{\Gamma_1}\frac{\partial \eta_q(x)}{\partial \nu}\phi_j(x)dx+
\langle \Delta\eta_q , \phi_j\rangle_{ L^2(\Omega)}\crr
&\disp =-\int_{\Gamma_1}q(x)\phi_j(x)dx+\gamma\langle  \eta_q , \phi_j\rangle_{ L^2(\Omega)}
  \end{array},\ j=1,2,\cdots,N.
\end{equation}
That is
 \begin{equation}\label{202010111251}
 \int_{\Gamma_1}q(x)\phi_j(x)dx=( \gamma-\lambda_j)\langle  \eta_q , \phi_j\rangle_{ L^2(\Omega)}
   ,\ j=1,2,\cdots,N.
\end{equation}
Combining \dref{202010111240}, \dref{202010111251} and \dref{20201091629q}, we obtain \dref{202010111157} easily.
\end{proof}

Next, we will find $K$ to detect  system $(A+\mu, J_{q}^{\gamma})$ exponentially in the sense of \cite{Weiss1997TAC}.
 Define the row vector
  \begin{equation}\label{202010111734}
  J_{N}=(J_{q}^{\gamma}(\phi_1),J_{q}^{\gamma}(\phi_2),\cdots,J_{q}^{\gamma}(\phi_N)),
  \end{equation}
  where  $\phi_i$ is given by \dref{20209101524}, $i=1,2,\cdots,N$ and    $N$ is an integer that satisfies \dref{201912301959}.
 By Lemma \ref{lm202010111155} and Lemma \ref{Lm202010131720} in Appendix,    the finite-dimensional system $(\Lambda_N,J_N)$ is observable, where
  $\Lambda_N$ is given by \dref{2020871455}. As a result, there exists a vector $K_N=(k_1,k_2,\cdots,k_N)^{\top}$ such that
  $\Lambda_N+K_NJ_N$ is Hurwitz.

  \begin{lemma}\label{lm202010111744}
    Suppose that the operator $A $ is  given by \dref{20209101543},
   the eigenpairs   $\{(\phi_j(\cdot), \lambda_j)\}_{j=1}^{\infty}$  satisfy  \dref{20209101524},
       $q(\cdot) \in L^2(\Gamma_1)$ satisfies \dref{20201091629q},
 the  integer $N$  satisfies  \dref{201912301959} and $\mu>0$.
 For any  $\gamma\in \R$ satisfying  \dref{202010111209}, let   $J_q^{\gamma}: L^2(\Omega)\to\R$  be given by
 \dref{202010111141} and $ J_N$ be given by \dref{202010111734}.
Then,
there exists a vector $K_N=(k_1,k_2,\cdots,k_N)^{\top}$ such that
  $\Lambda_N+K_NJ_N$ is Hurwitz,
  where
 $\Lambda_N$  is   given by  \dref{2020871455}.
      Moreover,  the operator $A+\mu+KJ_q^{\gamma}$ generates an exponentially stable $C_0$-semigroup
on $L^2(\Omega)$, where
    the operator  $K: \R\to L^2(\Omega)$  is given  by
     \begin{equation}\label{202010111744}
Kc=c\sum_{j=1}^{N}k_j\phi_j(\cdot),\  \ \forall\ c\in\R.
\end{equation}

  \end{lemma}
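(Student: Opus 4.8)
The plan is to dualize the proof of Lemma \ref{Th201912302046}, replacing the controllability/state-feedback pair used there by the observability/output-injection pair here. First I would apply Lemma \ref{lm202010111155}: since $q$ satisfies \dref{20201091629q} and $\gamma$ satisfies \dref{202010111209}, every entry $J_q^\gamma(\phi_j)$, $j=1,\dots,N$, of the row vector $J_N$ in \dref{202010111734} is nonzero. As $\Lambda_N$ in \dref{2020871455} is diagonal with distinct diagonal entries $\lambda_1+\mu>\cdots>\lambda_N+\mu$, Lemma \ref{Lm202010131720} in the Appendix then gives observability of the pair $(\Lambda_N,J_N)$; equivalently $(\Lambda_N^\top,J_N^\top)$ is controllable, and pole assignment produces a vector $K_N=(k_1,\dots,k_N)^\top$ making $\Lambda_N+K_NJ_N$ Hurwitz. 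This settles the first assertion.

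For the semigroup statement I would first observe that $J_q^\gamma\in\mathcal{L}(L^2(\Omega),\R)$ — this is the boundedness of the solution map of the elliptic problem \dref{202010111149} followed by the continuous trace on $\Gamma_1$ — while $K\in\mathcal{L}(\R,L^2(\Omega))$ by \dref{202010111744}, so $KJ_q^\gamma\in\mathcal{L}(L^2(\Omega))$. Since $A+\mu$ generates an analytic semigroup on $L^2(\Omega)$, the bounded perturbation result \cite[Corollary 2.3, p.81]{Pazy1983Book} shows $A+\mu+KJ_q^\gamma$ is again an analytic-semigroup generator, so it suffices to prove $\sigma(A+\mu+KJ_q^\gamma)\subset\{s:\mathrm{Re}\,s<0\}$. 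Here I would take an arbitrary eigenpair $(\lambda,g)$ with $g\neq0$ (so $g\in D(A)$) and split into the two cases already used in Lemma \ref{Th201912302046}. If $g=\sum_{j=1}^N g_j\phi_j\in\mathrm{Span}\{\phi_1,\dots,\phi_N\}$, pairing $(A+\mu+KJ_q^\gamma)g=\lambda g$ with each $\phi_k$ ($k\le N$) and using $(A+\mu)\phi_j=(\lambda_j+\mu)\phi_j$ together with $KJ_q^\gamma(g)=\big(\sum_{j=1}^N g_jJ_q^\gamma(\phi_j)\big)\sum_{i=1}^N k_i\phi_i$ reduces the characteristic equation to $(\lambda I_N-\Lambda_N-K_NJ_N)(g_1,\dots,g_N)^\top=0$ with $(g_1,\dots,g_N)^\top\neq0$, hence $\lambda\in\sigma(\Lambda_N+K_NJ_N)\subset\{\mathrm{Re}\,s<0\}$. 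If $g\notin\mathrm{Span}\{\phi_1,\dots,\phi_N\}$, pick $j_0>N$ with $\langle g,\phi_{j_0}\rangle_{L^2(\Omega)}\neq0$ and pair the characteristic equation with $\phi_{j_0}$: the term $KJ_q^\gamma(g)$ is a combination of $\phi_1,\dots,\phi_N$ and drops out, while self-adjointness of $A$ gives $(\lambda_{j_0}+\mu)\langle g,\phi_{j_0}\rangle_{L^2(\Omega)}=\lambda\langle g,\phi_{j_0}\rangle_{L^2(\Omega)}$, so $\lambda=\lambda_{j_0}+\mu<0$ by \dref{201912301959}. This exhausts all cases and yields exponential stability of the generated semigroup.

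The only delicate point is the well-definedness and boundedness of $J_q^\gamma$, i.e.\ unique solvability of \dref{202010111149} in $H^2(\Omega)$ with continuous dependence on $g$; this is standard elliptic theory as soon as $\gamma\notin\sigma(A)$, a condition I would take to be implicit here and which is in any case in force in the observer construction of Section \ref{Se.5}, where $\gamma$ plays the role of a shifted tuning parameter analogous to $\theta=-\alpha-\mu$ in Lemma \ref{lm202010131514}. Beyond this, the argument is essentially a transcription of the proof of Lemma \ref{Th201912302046} with the state feedback $K$ replaced by the output injection $K$, and I do not expect any essential obstacle.
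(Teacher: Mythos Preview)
Your proposal is correct and follows essentially the same route as the paper's own proof: invoke Lemma~\ref{lm202010111155} and Lemma~\ref{Lm202010131720} to get observability of $(\Lambda_N,J_N)$ and hence the Hurwitz gain $K_N$, then use the bounded-perturbation result for analytic semigroups and locate the spectrum by splitting eigenfunctions into the span of $\phi_1,\dots,\phi_N$ versus its complement. Your added remark about the implicit hypothesis $\gamma\notin\sigma(A)$ for the well-posedness of \dref{202010111149} is a valid observation that the paper leaves tacit.
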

  \begin{proof}
  Owing to
  \dref{20201091629q}, it follows from
  Lemma \ref{lm202010111155}  that \dref{202010111157} holds. By Lemma \ref{Lm202010131720} in Appendix,
   the pair $(\Lambda_N,J_N)$ is observable and
  there exists a vector $K_N=(k_1,k_2,\cdots,k_N)^{\top}$ such that
  $\Lambda_N+K_NJ_N$ is Hurwitz.

Since $A+\mu$ generates   an analytic semigroup $e^{(A+\mu)t}$  on $L^2(\Omega)$ and
$ KJ_q^{\gamma}$ is bounded, it follows from \cite[Corollary 2.3, p.81]{Pazy1983Book} that
 $  A+\mu+ KJ_q^{\gamma} $  also  generates an analytic semigroup on $L^2(\Omega)$.
The proof will be accomplished if we can show that  $\sigma(A+\mu+ KJ_q^{\gamma})\subset \{ s\ | \ {\rm Re}(s)<0\}$. For any $\lambda\in \sigma( A+\mu+ KJ_q^{\gamma})$, we consider the characteristic equation
$ (A+\mu+ KJ_q^{\gamma}) g=\lambda g $ with $g\neq 0 $.

When $g\in  {\rm Span}\{\phi_1,\phi_2,\cdots,\phi_N\} $,  set $g=\sum_{j=1}^Ng_j\phi_j$. The   characteristic equation
becomes
 \begin{equation}\label{202010111808}
 \sum_{j=1}^N(  \lambda_j  +{\mu})g_j\phi_j +  \left[\sum_{j=1}^{N}g_j J_{q}^{\gamma} (\phi_j)\right] \sum_{j=1}^N k_j  \phi_j= \sum_{j=1}^N\lambda g_j\phi_j.
\end{equation}
Take the  inner product with $\phi_i $, $i=1,2,\cdots,N$ on equation \dref{202010111808}
 to obtain
\begin{equation}\label{202010111811}
(  \lambda_i  + \mu)g_i  +  k_i\sum_{j=1}^{N}g_j J_{q}^{\gamma} (\phi_j) =  \lambda g_i,\ \ i=1,2,\cdots,N,
\end{equation}
which, together with \dref{2020871455} and \dref{202010111734}, leads to
\begin{equation}\label{202010111812}
 (\lambda -\Lambda_N-K_NJ_N) \begin{pmatrix}
                       g_1\\g_2\\\vdots\\g_N
                     \end{pmatrix} =0.
\end{equation}
Since $(g_1,g_2,\cdots,g_N)\neq 0$, we have
\begin{equation}\label{202010111813}
{\rm Det}(\lambda -\Lambda_N-K_NJ_N) =0.
\end{equation}
 Hence, $\lambda\in \sigma(\Lambda_N{+}K_NJ_N) \subset\{ s\ | \ {\rm Re} (s) <0\}$, since $\Lambda_N{+}K_NJ_N$
 is Hurwitz.

  When   $g\notin {\rm Span}\{\phi_1,\phi_2,\cdots,\phi_N\}$,   there exists a $j_0>N$ such  that $\displaystyle \int_{0}^{1}g(x)\phi_{j_0}(x)dx\neq 0$.
Take the  inner product with $\phi_{j_0}$ on equation $ (A+{\mu}+ K J_q^{\gamma}) g=\lambda g$
 to get \begin{equation}\label{202010111814}
( \lambda_{j_0}+\mu) \int_{0}^{1}g(x)\phi_{j_0}(x)dx=  \lambda \int_{0}^{1}g(x)\phi_{j_0}(x)dx,
\end{equation}
which, together with \dref{201912301959},   implies that   $ \lambda = \lambda_{j_0}+\mu<0$. Therefore,
 $\lambda\in \sigma( A+{\mu}+ K J_q^{\gamma} )\subset \{ s\ | \ {\rm Re}(s)<0\}$.
  The proof is  complete.
 \end{proof}

%
%
%

%
%
%
%

\section{Observer design}\label{Se.5}
This section is devoted to the observer design by the newly proposed approach in \cite{FPart2}.
Instead of the system \dref{2020981903}, we   design the observer for the following system:
\begin{equation}\label{202010101654}
\left\{\begin{array}{l}
 w_{t}(x,t)= \Delta w (x,t)+\mu w(x,t),\ \   x\in \Omega, \crr
 w (x,t)=0,\ \ \ x\in\Gamma_0,\ \ \disp \frac{\partial w(x,t)}{\partial \nu}=u(x,t),\ \ x\in\Gamma_1,\crr
 v_t(x,t)=-\beta v(x,t)+Q   B^*    w(x,t),\ \ x\in\Gamma_1,\crr
 \disp y_v(t)=\int_{\Gamma_1}v(x,t)dx,
  \end{array}\right.
\end{equation}
where $\beta >0$ is a tuning parameter, $v(\cdot,t)$ is an extended state,
$y_v$ is a new output, $B^*$ is given by \dref{2020910949}  and
$Q \in \mathcal{L}(L^2(\Gamma_1))$      is given by
 \begin{equation}\label{20201011931}
(Q  g)(x)= q(x)g(x),\ \ x\in \Gamma_1,\ \ \forall\ \ g\in L^2(\Gamma_1)
\end{equation}
with  $q\in L^2(\Gamma_1)$ satisfying
 \dref{20201091629q}.
 By \dref{20209101543}  and \dref{4.8}, system \dref{202010101654} can be written as
\begin{equation}\label{202010101753}
\left\{\begin{array}{l}
 w_{t}(\cdot,t)=(\tilde{A}+\mu)w (\cdot,t)+B u(\cdot,t),\crr
   v_t(\cdot,t)=-\beta v(\cdot,t)+QB^*w(\cdot,t),\crr
   y_v(t)=C_vv (\cdot,t),
  \end{array}\right.
\end{equation}
where $C_v:L^2(\Gamma_1)\to\R$ is defined by
\begin{equation}\label{202010111134}
 C_vh=\int_{\Gamma_1}h(x)dx,\ \ \forall\ h\in L^2(\Gamma_1).
\end{equation}

Now we  demonstrate the key  idea of the  observer design via a simple finite-dimensional example.
\begin{example}\label{Ex202010111134}
 Consider the following system in the state space $\R^n\times \R$:
\begin{equation}\label{20201012750}
\left\{\begin{array}{l}
 \dot{x}_1(t)=A  x_1( t)+B  u(t),\crr
   \dot{x}_2(t)=-\beta x_2( t)+ QB^*x_1(t),\crr
   y (t)= x_2 ( t),
  \end{array}\right.\ \ \beta>0,
\end{equation}
where $A \in  \R^{n\times n} $  is the system matrix, $B  \in   \R^{n}   $ is the control
 matrix,
    $Q\in  \R  $ is a constant, $u(t)$ is the control and
  $y(t)$ is the measurement.
 The Luenberger observer  of system  \dref{20201012750}  is designed as
 \begin{equation} \label{Fh20202061740}
 \left\{\begin{array}{l}
\disp \dot{\hat{x}}_1(t) =  A \hat{x}_1(t)+F_1[x_2(t)-  \hat{x}_2(t)]+Bu(t),\crr
\disp  \dot{\hat{x}}_2(t) =-\beta\hat{x}_2(t)+ QB^*\hat{x}_1(t)-F_2[x_2(t)- \hat{x}_2(t)],
\end{array}\right.
\end{equation}
where $F_1\in \R^n$ and $F_2\in \R$ are  the gain parameters to be determined.
To demonstrate the key  idea of the  observer design for the infinite-dimensional system \dref{202010101654}, we will find a new way to choose $F_1$ and $F_2$ rather than the
conventional pole placement theorem.
 Let
\begin{equation} \label{Fh20202061802}
\tilde{x}_j(t)=x_j(t)-\hat{x}_j(t),\ \ j=1,2,
\end{equation}
then the error is governed by
\begin{equation} \label{Fh20202061803}
 \left\{\begin{array}{l}
\disp \dot{\tilde{x}}_1(t) = A\tilde{x}_1(t)-F_1 \tilde{x}_2(t),\crr
\disp  \dot{\tilde{x}}_2(t) =  -\beta\tilde{x}_2(t)+QB^*\tilde{x}_1(t)+F_2 \tilde{x}_2(t).
\end{array}\right.
\end{equation}
If we   pick $F_1$ and $F_2$ properly such that  system \dref{Fh20202061803} is stable, then
   $(x_1,x_2)$  can be estimated in the sense that
\begin{equation} \label{Fh20202062106}
  \|(\hat{x}_1(t)-x_1(t) ,\hat{x}_2(t)-x_2(t))\|_{\R^n\times \R}\to  0
  \ \ \mbox{as}\ \ t\to\infty.
\end{equation}
 Inspired by   \cite{FPart2}, the $F_1$ and $F_2$ can be chosen
 easily by decoupling  the   system \dref{Fh20202061803} as a cascade system.
 Consider the following  transformation
\begin{equation} \label{Fh20202061804}
\begin{array}{l}
\begin{pmatrix}
I_n&0\\
P&1
\end{pmatrix}
\begin{pmatrix}
A &-F_1 \\
 QB^*&F_2-\beta
\end{pmatrix}\begin{pmatrix}
I_n&0\\
P&1
\end{pmatrix}^{-1}\crr
=
\begin{pmatrix}
A + F_1P& -F_1 \\
P(A+F_1P)+QB^*-(F_2-\beta)P&F_2-\beta -PF_1
\end{pmatrix},
\end{array}
\end{equation}
where $P\in \R^{1\times n}$ to be determined.
If we choose
\begin{equation} \label{20201012815}
F_2=PF_1\ \ \mbox{and}\ \ P A +QB^*+\beta P=0,
\end{equation}
then the matrix on the right side of \dref{Fh20202061804} is reduced to
\begin{equation} \label{20201012819}
 \begin{pmatrix}
A + F_1P& -F_1 \\
 0& -\beta
\end{pmatrix},
\end{equation}
which is obviously a Hurwitz matrix provided   $A + F_1P$ is Hurwitz.
To   sum up, the tuning parameters $F_1$ and $F_2$ can be chosen by the following scheme:
(i), solve the equation  $P A +QB^*+\beta P=0$ to get $P$; (ii), choose
$F_1$ such that  $A + F_1P$ is Hurwitz; (iii), let $F_2=PF_1$.
 \end{example}

 Now, we return to  design an observer for system \dref{202010101753}.
 Inspired by Example \ref{Ex202010111134},     the  observer of system \dref{202010101753} can be designed as
\begin{equation}\label{202010101756}
\left\{\begin{array}{l}
\disp  \hat{w}_{t}(x,t)= \Delta \hat{w} (x,t)+\mu \hat{w}(x,t)+ K[C_vv(\cdot,t)-C_v\hat{v}(\cdot,t)]
 ,\ \   x\in \Omega, \crr
 \hat{w} (x,t)=0,\ \ \ x\in\Gamma_0,\ \ \disp \frac{\partial \hat{w}(x,t)}{\partial \nu}=u(x,t),\ \ x\in\Gamma_1,\crr
  \disp  \hat{v}_t(\cdot,t)=-\beta\hat{v}(\cdot,t)+Q B^*\hat{w}(\cdot,t) - L[C_vv(\cdot,t)-C_v\hat{v}(\cdot,t)]  \ \ \mbox{in}\ \ \Gamma_1,
  \end{array}\right.
\end{equation}
where $K$ and $L$ are tuning parameters  that can be chosen by the following scheme:
\begin{itemize}

   \item  Solve
 the following   equation
 \begin{equation} \label{202010112016}
 \beta P+P(A+\mu) + Q B^*=0
\end{equation}
to get  $P\in \mathcal{L}(L^2(\Omega),L^2(\Gamma_1))$;

\item Find $K$  to detect    system   $(A+\mu, C_vP)$;

  \item  Let $L=PK$.

\end{itemize}

By a straightforward computation, the solution of \dref{202010112016} is found to be
 \begin{equation} \label{20201011934}
 P= -Q B^* (\beta+\mu+A)^{-1}\in \mathcal{L}(L^2(\Omega),L^2(\Gamma_1)).
\end{equation}
By \dref{2020910949}, \dref{202010111134},  \dref{20201011931} and \dref{202010111141}, we have
 \begin{equation} \label{20201012835}
 C_vP= J_{q}^{\gamma}\in \mathcal{L}(L^2(\Omega),\R)\ \ \mbox{with}\ \ \gamma=-\beta-\mu.
\end{equation}
By Lemma \ref{lm202010111744}, \dref{202010111734}
and  \dref{2020871455},
 the operator $K$ can be chosen by   \dref{202010111744},
where $ (k_1,k_2,\cdots,k_N)^{\top}$ is a vector such that
  $\Lambda_N+(k_1,k_2,\cdots,k_N)^{\top}J_N$ is Hurwitz. As a result of \dref{202010111744}, \dref{20201011931} and \dref{20201011934},
   \begin{equation}\label{202010121035}
   L=PK=\sum_{j=1}^{N}k_jP\phi_j   =     -  \sum_{j=1}^{N}k_jQ B^* (\beta+\mu+A)^{-1}\phi_j
   =   -  \sum_{j=1}^{N}k_jq(x)  \xi_j (x), \ \ x\in\Gamma_1,
   \end{equation}
where
 \begin{equation} \label{20201011947}
 \left\{\begin{array}{l}
\disp (\beta+\mu+\Delta) \xi_j=\phi_j\ \ \mbox{in}\ \ \Omega,\crr
\disp  \xi_j(x)=0,\ x\in\Gamma_0,\ \ \frac{\partial \xi_j(x)}{\partial \nu}=0,\ x\in \Gamma _1.
\end{array}\right.j=1,2,\cdots,N.
\end{equation}
  Combining \dref{202010111744} and \dref{202010121035}, the observer \dref{202010101756}  turns to be
 \begin{equation}\label{202010121033}
\left\{\begin{array}{l}
\disp  \hat{w}_{t}(x,t)= \Delta \hat{w} (x,t)+\mu \hat{w}(x,t)+  [C_vv(\cdot,t)-C_v\hat{v}(\cdot,t)] \sum_{j=1}^{N}k_j\phi_j(x)
 ,\ \   x\in \Omega,  \crr
 \hat{w} (x,t)=0,\ \ \ x\in\Gamma_0,\ \ \disp \frac{\partial \hat{w}(x,t)}{\partial \nu}=u(x,t),\ \ x\in\Gamma_1,\crr
  \disp  \hat{v}_t(x,t)=-\beta\hat{v}(x,t)+Q B^*\hat{w}(x,t)  + \sum_{j=1}^{N}k_jq(x)  \xi_j (x)[C_vv(\cdot,t)-C_v\hat{v}(\cdot,t)] ,\  \ x\in \Gamma_1,
  \end{array}\right.
\end{equation}
 where
 $\xi_j$ is given by \dref{20201011947}, $j=1,2,\cdots,N$.
 By  \dref{202010111744}
  and \dref{202010121035},
 the observer can be written as the abstract form:
 \begin{equation}\label{202010121033abstract}
 \frac{d}{dt}(\hat{w} (\cdot,t),\hat{v} (\cdot,t))^{\top}=  \A (\hat{w} (\cdot,t),\hat{v} (\cdot,t))^{\top}+  (K,-L)  ^{\top}C_vv(\cdot,t) ,
\end{equation}
where the operator $\A: D(\A)\subset L^2(\Omega)\times L^2(\Gamma_1)\to  L^2(\Omega)\times L^2(\Gamma_1)$ is defined by
\begin{equation}\label{202010121441}
 \A=\begin{pmatrix}
A+\mu &-KC_v\\
QB^*&LC_v-\beta
    \end{pmatrix} \ \ \mbox{with}\ \  D(\A)=  D(A)\times L^2(\Gamma_1).
 \end{equation}

\begin{theorem}\label{Th20171226744}
  Suppose that the operator $A $ is  given by \dref{20209101543},
  $B^*$ is given by \dref{2020910949},  the eigenpairs   $\{(\phi_j(\cdot), \lambda_j)\}_{j=1}^{\infty}$   are  given by   \dref{20209101524}
 and  $Q$ is given by  \dref{20201011931}
with   $q(\cdot) \in L^2(\Gamma_1)$  satisfying \dref{20201091629q}.
Let the  integer $N$  satisfy   \dref{201912301959} and $\mu,\beta>0 $ satisfy
   \begin{equation}\label{202010121100}
-\beta-\mu \neq \lambda_j ,\ \ j=1,2,\cdots, N.
\end{equation}
  Then,
  for any      $({w}(\cdot,0), {v}(\cdot,0),\hat{w}(\cdot,0),\hat{v}(\cdot,0))^{\top}\in [L^2(\Omega)\times L^2(\Gamma_1)]^2$ and $u\in L^2_{\rm loc}([0,\infty);L^2(\Gamma_1))$,
  the observer \dref{202010121033} of system \dref{202010101654}
admits a unique   solution
 $(\hat{w} ,\hat{v} )^{\top}\in C([0,\infty);L^2(\Omega)\times L^2(\Gamma_1))$ such that
  \begin{equation} \label{20171226746}
 e^{\omega t}\|(w(\cdot,t)-\hat{w}(\cdot,t),v(\cdot,t)-\hat{v}(\cdot,t))\|_{L^2(\Omega)\times L^2(\Gamma_1)}\to0\ \ \mbox{as}\ \ t\to\infty,
\end{equation}
where $\omega$ is a positive constant that is independent of $t$.
\end{theorem}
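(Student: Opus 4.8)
The plan is to reduce the whole statement to the exponential stability of the error operator $\A$ in \dref{202010121441}, which I would obtain by decoupling $\A$ into a cascade via a bounded similarity transformation, exactly as in the finite-dimensional Example \ref{Ex202010111134}.

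First I would derive the error equation. Writing the plant \dref{202010101654} and the observer \dref{202010121033} in the common abstract form
\[
\frac{d}{dt}(\cdot,\cdot)^{\top}=\A(\cdot,\cdot)^{\top}+(Bu(\cdot,t),0)^{\top}+(K,-L)^{\top}C_vv(\cdot,t),
\]
where in both equations the injection term $(K,-L)^{\top}C_vv$ is driven by the \emph{plant} output $C_vv$, and subtracting, the two inhomogeneous contributions cancel, so that $\tilde w:=w-\hat w$ and $\tilde v:=v-\hat v$ obey
\[
\frac{d}{dt}(\tilde w(\cdot,t),\tilde v(\cdot,t))^{\top}=\A(\tilde w(\cdot,t),\tilde v(\cdot,t))^{\top}.
\]
Thus \dref{20171226746} will follow once $\A$ is shown to generate an exponentially stable $C_0$-semigroup $e^{\A t}$ on $L^2(\Omega)\times L^2(\Gamma_1)$: then $\|e^{\A t}\|\le Me^{-\delta t}$ for some $M,\delta>0$, and any $\omega\in(0,\delta)$ does the job.

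To analyze $\A$ I would introduce $\mathbb{P}(f,g)^{\top}=(f,Pf+g)^{\top}$, with $P\in\mathcal{L}(L^2(\Omega),L^2(\Gamma_1))$ the solution \dref{20201011934} of the Sylvester equation \dref{202010112016}; $\mathbb{P}$ is bounded and boundedly invertible, with $\mathbb{P}^{-1}(f,g)^{\top}=(f,-Pf+g)^{\top}$. A direct computation — the infinite-dimensional counterpart of \dref{Fh20202061804}, using $L=PK$ to annihilate the $(2,2)$-correction and \dref{202010112016} to annihilate the $(2,1)$-block — gives
\[
\mathbb{P}\A\mathbb{P}^{-1}=\A_{\mathbb{P}}:=\begin{pmatrix}A+\mu+KC_vP&-KC_v\\0&-\beta\end{pmatrix},\qquad D(\A_{\mathbb{P}})=\mathbb{P}D(\A)=D(A)\times L^2(\Gamma_1).
\]
By \dref{20201012835} one has $C_vP=J_q^{\gamma}$ with $\gamma=-\beta-\mu$, and the standing hypothesis \dref{202010121100} is precisely condition \dref{202010111209} for this $\gamma$; hence Lemma \ref{lm202010111744} applies and $A+\mu+KC_vP$ generates an exponentially stable $C_0$-semigroup on $L^2(\Omega)$. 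Since $-\beta$ generates the exponentially stable semigroup $e^{-\beta t}$ on $L^2(\Gamma_1)$ and the off-diagonal block $-KC_v\in\mathcal{L}(L^2(\Gamma_1),L^2(\Omega))$ is bounded, the block-triangular operator $\A_{\mathbb{P}}$ generates an exponentially stable $C_0$-semigroup by \cite[Lemma 3.2]{FPart1}, and therefore so does $\A$ by the similarity above; this proves \dref{20171226746}.

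For the well-posedness claim I would argue as follows. For $u\in L^2_{\rm loc}([0,\infty);L^2(\Gamma_1))$, the plant \dref{202010101654} admits a unique mild solution $(w,v)^{\top}\in C([0,\infty);L^2(\Omega)\times L^2(\Gamma_1))$: $w$ is the mild solution of $w_t=(\tilde A+\mu)w+Bu$ by admissibility of the Neumann control operator $B$ for the analytic semigroup $e^{(A+\mu)t}$, and then, $B^*$ being an admissible observation operator for that semigroup (the very property that makes $P=-QB^*(\beta+\mu+A)^{-1}$ bounded in \dref{20201011934}), $QB^*w\in L^2_{\rm loc}([0,\infty);L^2(\Gamma_1))$ along $w$, so $v$ is recovered by variation of constants. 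Consequently $C_vv\in C([0,\infty);\R)$, and since $\A$ generates a $C_0$-semigroup while $(K,-L)^{\top}\in\mathcal{L}(\R,L^2(\Omega)\times L^2(\Gamma_1))$, the observer \dref{202010121033}, i.e.\ \dref{202010121033abstract}, has a unique mild solution $(\hat w,\hat v)^{\top}\in C([0,\infty);L^2(\Omega)\times L^2(\Gamma_1))$. I expect the main obstacle to be precisely this last step: handling the unbounded operators $B$ and $B^*$ cleanly, i.e.\ verifying the admissibility properties and checking that the difference of the mild solutions of \dref{202010101654} and \dref{202010121033} genuinely solves the homogeneous abstract error equation generated by $\A$. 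The decoupling identity itself is routine once the domains are tracked with care.
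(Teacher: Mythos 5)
Your proposal is correct and follows essentially the same route as the paper: reduce to the homogeneous error equation generated by $\A$, decouple $\A$ via the bounded similarity $\mathbb{P}$ built from the solution $P$ of \dref{202010112016}, invoke Lemma \ref{lm202010111744} together with $C_vP=J_q^{\gamma}$, $\gamma=-\beta-\mu$, for the $(1,1)$-block, and conclude exponential stability of $\A_{\mathbb{P}}$ (hence of $\A$) from the block-triangular structure. The only cosmetic differences are that the paper obtains the error system by direct subtraction in PDE form and recovers $(\hat w,\hat v)$ as $(w-\tilde w,v-\tilde v)$ rather than via the variation-of-constants formula for \dref{202010121033abstract}, while it simply asserts the well-posedness of the plant that you justify through admissibility of $B$ and $B^*$.
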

\begin{proof}
 For any $({w}(\cdot,0), {v}(\cdot,0)  )^{\top}\in  L^2(\Omega)\times L^2(\Gamma_1) $ and $u\in L^2_{\rm loc}([0,\infty);L^2(\Gamma_1))$, it is well known that the control plant
\dref{202010101654} admits a unique solution $( {w} , {v} )^{\top}\in C([0,\infty);L^2(\Omega)\times L^2(\Gamma_1))$ such that $y_v\in L^2_{\rm loc}[0,\infty)$.
 Let
\begin{equation}\label{202010111958}
\left\{\begin{array}{l}
  \disp \tilde{w}(x,t)=w(x,t)-\hat{w} (x,t), \ \ x\in \Omega,\crr
\disp   \tilde{v}(s,t)=v(s,t)-\hat{v}(s,t),\ \ s\in \Gamma_1,
  \end{array}\right.\ \ t\geq0.
\end{equation}
Then, the errors are governed by
 \begin{equation}\label{202010121419}
\left\{\begin{array}{l}
\disp  \tilde{w}_{t}(x,t)= \Delta \tilde{w} (x,t)+\mu \tilde{w}(x,t)- C_v\tilde{v}(\cdot,t)  \sum_{j=1}^{N}k_j\phi_j(x)
 ,\ \   x\in \Omega,  \crr
 \tilde{w} (x,t)=0,\ \ \ x\in\Gamma_0,\ \ \disp \frac{\partial \tilde{w}(x,t)}{\partial \nu}=0,\ \ x\in\Gamma_1,\crr
  \disp  \tilde{v}_t(x,t)=-\beta\tilde{v}(x,t)+Q B^*\tilde{w}(x,t)  - C_v\tilde{v}(\cdot,t) \sum_{j=1}^{N}k_jq(x)  \xi_j (x) ,\  \ x\in \Gamma_1.
  \end{array}\right.
\end{equation}
By \dref{202010121441},  \dref{202010121035} and \dref{202010111744}, system \dref{202010121419} can be written abstractly
\begin{equation}\label{202010112001}
\frac{d}{dt}(\tilde{w} (\cdot,t),\tilde{v} (\cdot,t))^{\top}=  \A (\tilde{w} (\cdot,t),\tilde{v} (\cdot,t))^{\top}.
\end{equation}
Inspired by \cite{FPart2}, we introduce the following transformation
   \begin{equation}\label{2020109938obser}
\mathbb{P}(f,g)^{\top}=(f,g+Pf)^{\top},\ \ (f,g)^{\top}\in L^2(\Omega)\times L^2(\Gamma_1),
\end{equation}
where $P\in \mathcal{L}(L^2(\Omega),L^2(\Gamma_1))$  is the solution of system \dref{202010112016}.
 Then
$\mathbb{P}$ is invertible and its inverse is given by
 \begin{equation}\label{202010112014}
\mathbb{P}^{-1}(f,g)^{\top}=(f,g-Pf)^{\top},\ \ (f,g)^{\top}\in L^2(\Omega)\times L^2(\Gamma_1).
\end{equation}
Moreover, a simple computation shows that (see, e.g., \cite[Theorem 6.1]{FPart2})
 \begin{equation}\label{202010121457}
 \mathbb{P}\A \mathbb{P}^{-1}=\A_\mathbb{P}, \ \ D(\A_\mathbb{P})= \mathbb{P}D(\A),
\end{equation}
where
\begin{equation}\label{202010121459}
\disp \A_{\mathbb{P}} =\begin{pmatrix}
{A}+\mu+KC_vP   &-KC_v\\
0& -\beta
    \end{pmatrix}\ \ \mbox{with}\ \
\disp D(\A_{\mathbb{P}})= D(A)\times  L^2(\Gamma_1)  .
 \end{equation}
By Lemma \ref{lm202010111744} and \dref{20201012835}, the operator  ${A}+\mu+KC_vP$
generates an exponentially stable $C_0$-semigroup
 on $L^2(\Omega)$.
 Thanks to the   block-triangle structure and \cite{FPart1}, the operator $\A_{\mathbb{P}}$
 generates an exponentially  stable  $C_0$-semigroup $e^{\A_{\mathbb{P}}t}$
 on $L^2(\Omega)\times L^2(\Gamma_1)$. By virtue of the similarity \dref{202010121457},
 the operator $\A $ also
 generates an exponentially  stable $C_0$-semigroup $e^{\A t}$
 on $L^2(\Omega)\times L^2(\Gamma_1)$. As a result, the error system with initial state
  $(\tilde{w}(\cdot,0),\tilde{v}(\cdot,0) )^{\top}=(w (\cdot,0)-\hat{w}(\cdot,0),v (\cdot,0)-\hat{v}(\cdot,0))^{\top}\in L^2(\Omega)\times L^2(\Gamma_1)$
  admits a unique solution
 $(\tilde{w} ,\tilde{v} )^{\top}\in C([0,\infty);L^2(\Omega)\times L^2(\Gamma_1))$
 such that
  \begin{equation} \label{20171226746error}
 e^{\omega t}\|(\tilde{w}(\cdot,t),\tilde{v}(\cdot,t))\|_{L^2(\Omega)\times L^2(\Gamma_1)}\to0\ \ \mbox{as}\ \ t\to\infty,
\end{equation}
where $\omega$ is a positive constant that is independent of $t$.
Let
 \begin{equation}\label{202010112018}
  (\hat{w}(\cdot,t),\hat{v}(\cdot,t)) = (
  w (\cdot,t)-\tilde{w}(\cdot,t),v (\cdot,t)-\tilde{v}(\cdot,t)).
\end{equation}
Then, a straightforward computation shows that such a defined    $(\hat{w} ,\hat{v} )^{\top}\in C([0,\infty);L^2(\Omega)\times L^2(\Gamma_1))$  is a solution of system
\dref{202010121033abstract} or equivalently, system \dref{202010121033}. Moreover, \dref{20171226746} holds due to \dref{20171226746error} and \dref{202010111958}.
Owing to the linearity of system  \dref{202010121033}, the solution is unique.
 \end{proof}


\section{Numerical simulations}\label{Numerical}

 In this section,
we present some numerical simulations for the closed-loop
system \dref{20201092100} to demonstrate the  theoretical results visually.
In order to avoid the difficulty of
numerical discretization,
  we consider the unstable heat   system  in the  rectangular domain
   $\Omega=\left \{(x ,y )\in \mathbb{R}^2\ |\  0< x <1, 0< y  <1\right\}  $.
  The actuator is installed on the boundary
  $$\Gamma_1=\left\{(x ,y )\in \mathbb{R}^2\ |\  x =1,0\leq  y \leq1\right \} \cup   \left\{(x ,y )\in \mathbb{R}^2\ |\   y =1,0\leq x \leq1  \right\}.$$
    The fixed boundary is
   $$\Gamma_0=\left\{(x ,y )\in \mathbb{R}^2\ |\  x =0,0\leq  y <1\right \} \cup   \left\{(x ,y )\in \mathbb{R}^2\ |\   y =0,0\leq x <1  \right\}.$$
%

  We
adopt the finite difference scheme  to discretize system
\dref{20201092100} directly.  The numerical
results are programmed in Matlab.
Inspired by \cite{ZuaZuasimulation} where
the  uniform exponential decay with respect to the mesh
size is obtained by   the finite difference method,
  the space step $h$ and time
 step  $\tau$ are   taken as $h =\tau=   0.05$.
 The initial state and tuning parameters  are  chosen as
\begin{equation}\label{20198101148}\left\{\begin{array}{l}
   \disp w(x ,y ,0)= x \sin2\pi y   ,\ \ v(x ,y ,0)=0, \crr
   {p}(x ,y )=\sin x  \sin y ,\ \ \mu=6,\ \alpha=3,\ \ N=1,\ \ l_1= 15.
\end{array}\right.
\end{equation}
 By a simple numerical computation, the largest eigenvalue of the operator \dref{20209101543} on  $\Omega$ is $\lambda_1\approx -4.6947$. This implies that  $\lambda_1+\mu>0$ and hence the open-loop system \dref{2020981903} is unstable.

The   the initial state and the  final state  of closed-loop system
 \dref{20201092100}  are  plotted in  Figure \ref{Fig1}.
 \begin{figure}[h]
\hspace{0.5cm}\subfigure[   $w(x,y,0)$.]
  {\includegraphics[width=0.4\textwidth]{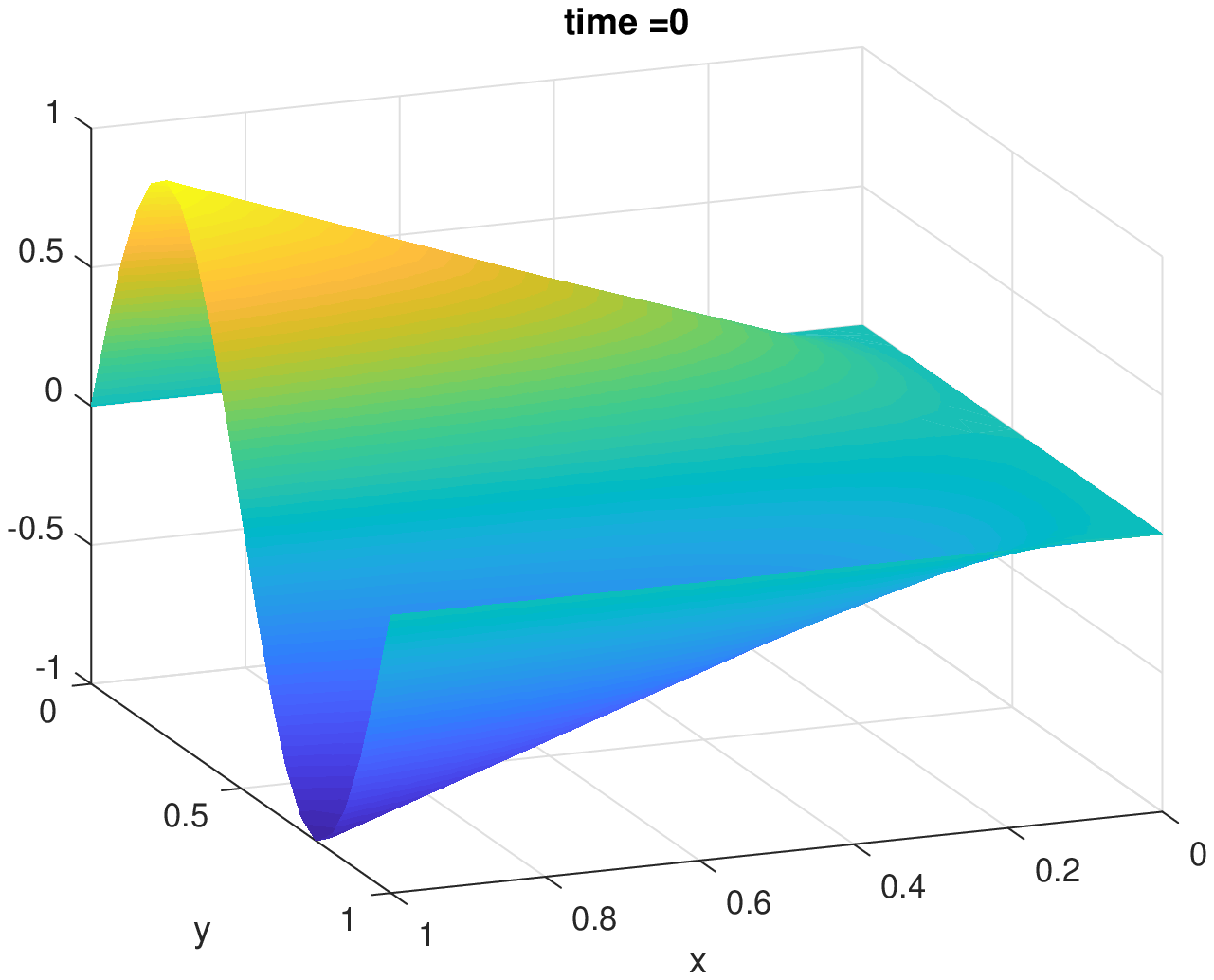}}
 \hspace{2cm}\subfigure[  $w(x,y,4)$.]
  {\includegraphics[width=0.4\textwidth]{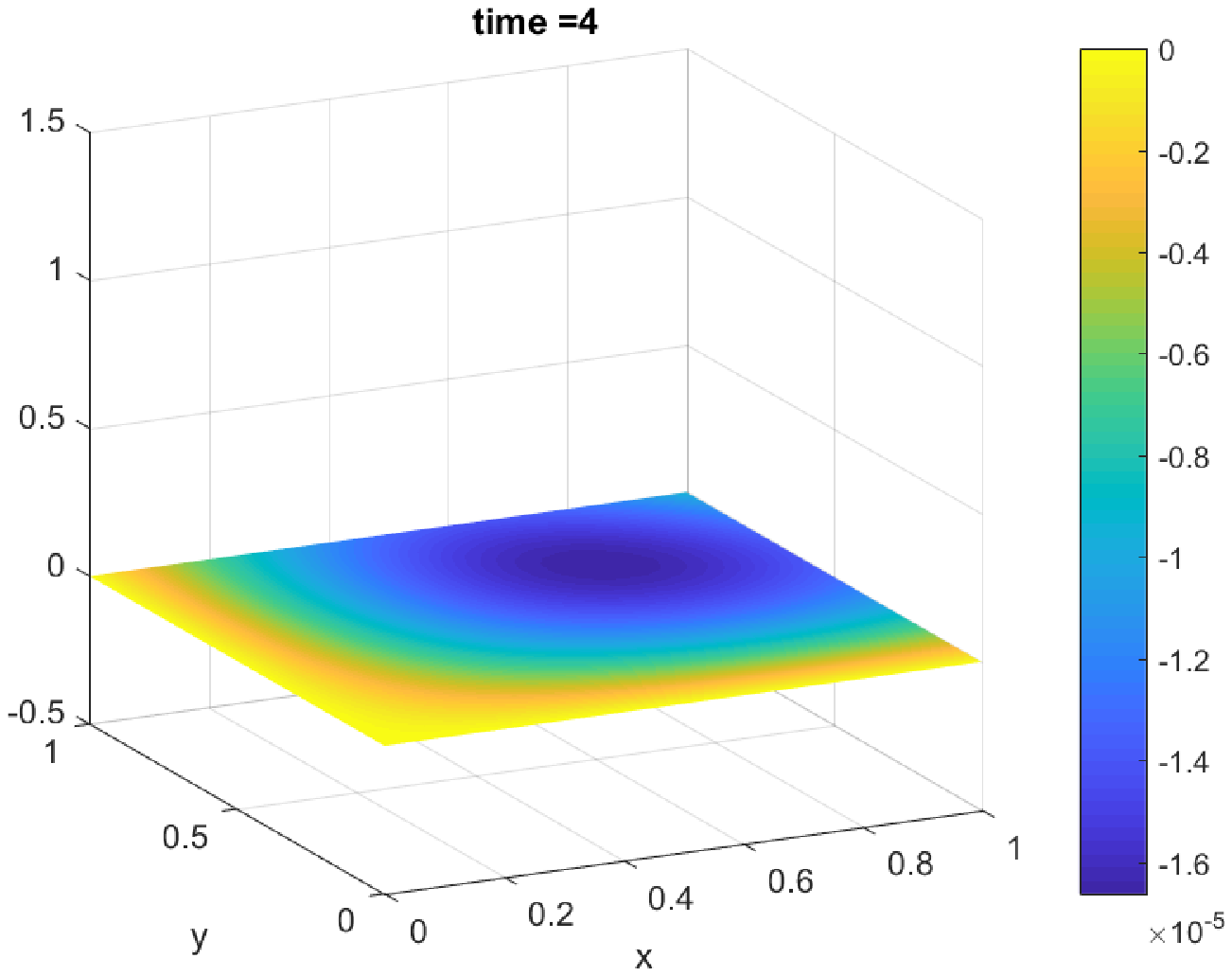}}
         \caption{  The initial state and the final state. }\label{Fig1}
          \end{figure}
          \begin{figure}[h]
\hspace{0.5cm}\subfigure[   $w(x,0.5,t)$ with control.]
  {\includegraphics[width=0.4\textwidth]{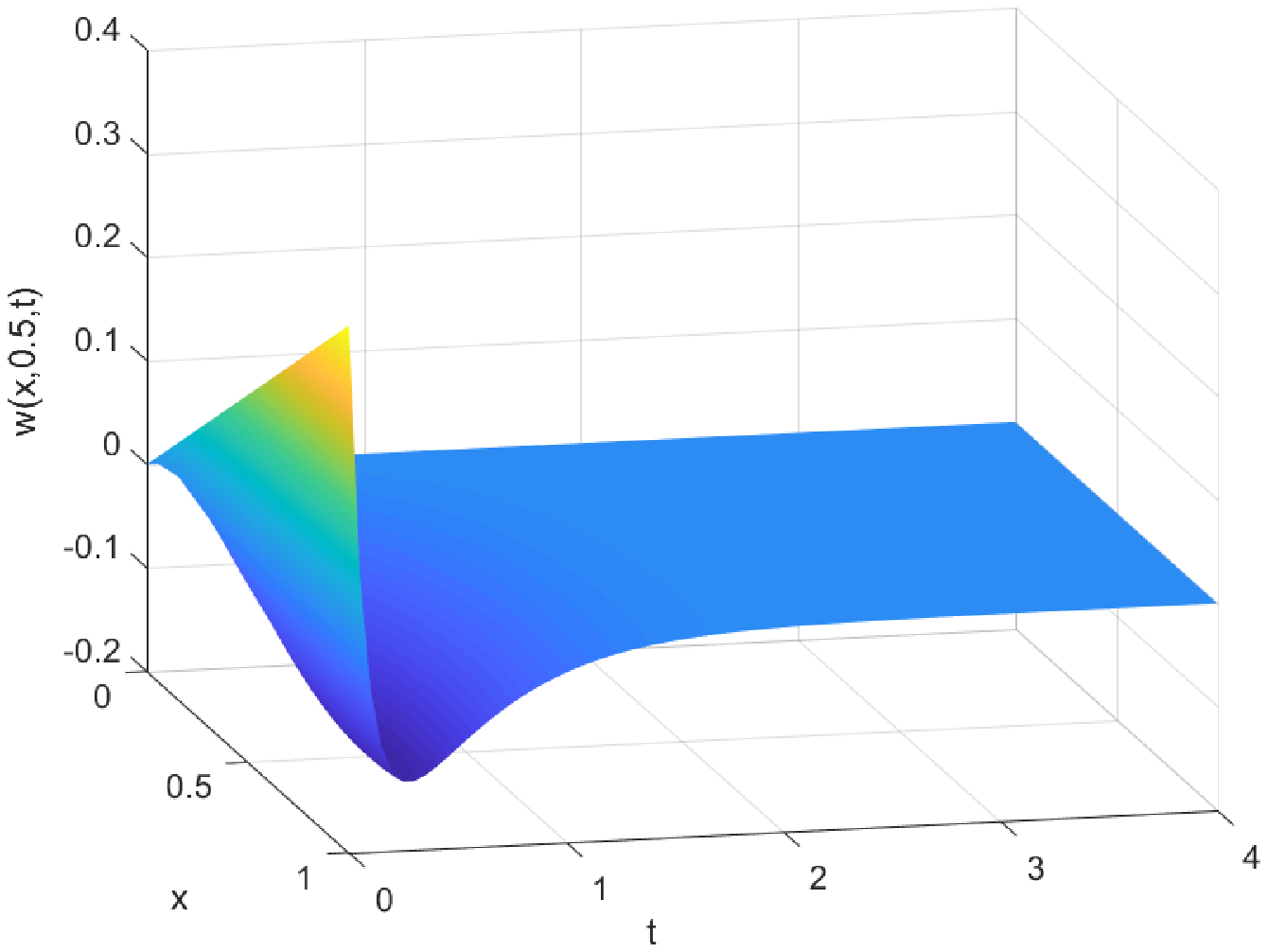}}
 \hspace{2cm}\subfigure[  $w(x,0.5,t)$ without control.]
  {\includegraphics[width=0.4\textwidth]{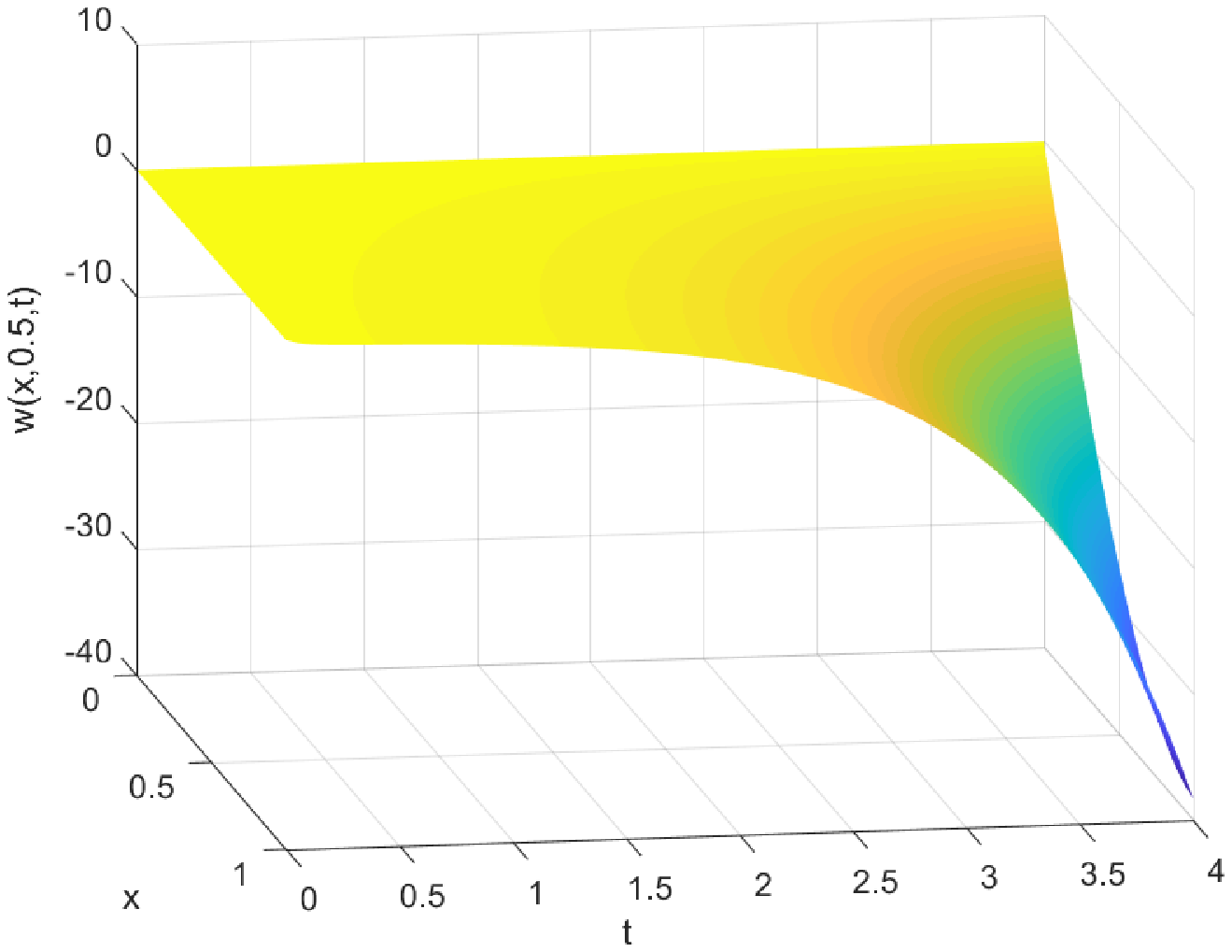}}
         \caption{  State trace    with control and state trace without control. }\label{Fig2}
          \end{figure}
In order to demonstrate the dynamic evolution of the closed-loop system, the state
trace $w(x,0.5,t)$
 is plotted in Figure~\ref{Fig2}(a).  The same  state
trace without control
 is plotted in Figure~\ref{Fig2}(b) for comparison.
 The  distributed
   control traces   $v(x,y,t )$ are plotted in Figure~\ref{Fig3}(a) and \ref{Fig3}(b).
   To demonstrate the decay rate,
   the  logarithmic state norm $\|w(\cdot,t)\|_{L^2(\Omega)}$ decay curve
  and the    curve of the   state norm itself are plotted in
  Figure~\ref{Fig4}(a) and  Figure~\ref{Fig4}(b), respectively.
 \begin{figure}[h]
\hspace{0.5cm}\subfigure[   $v(x,1,t)$.]
  {\includegraphics[width=0.4\textwidth]{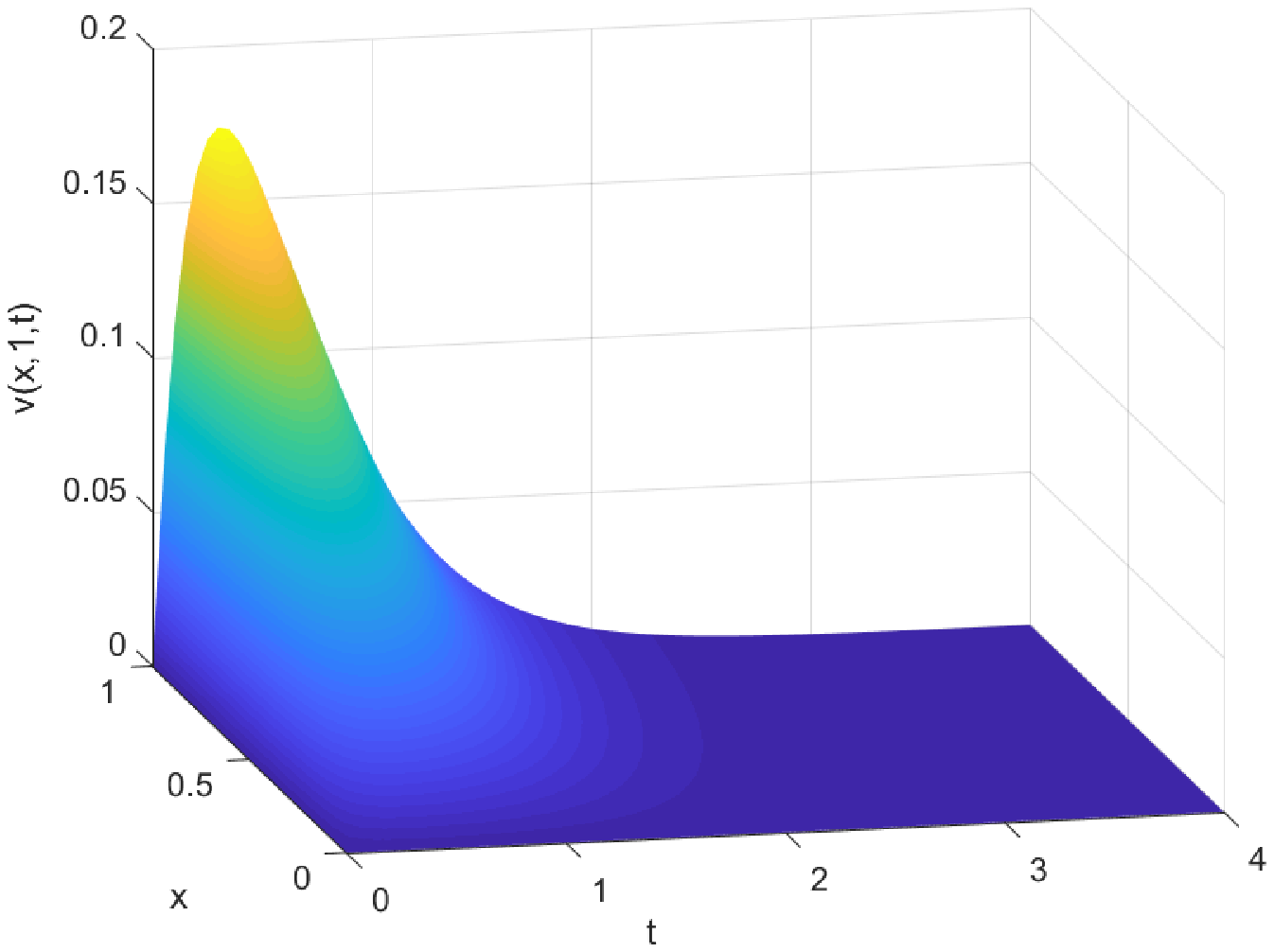}}
 \hspace{2cm}\subfigure[  $v(1,y,t)$.]
  {\includegraphics[width=0.4\textwidth]{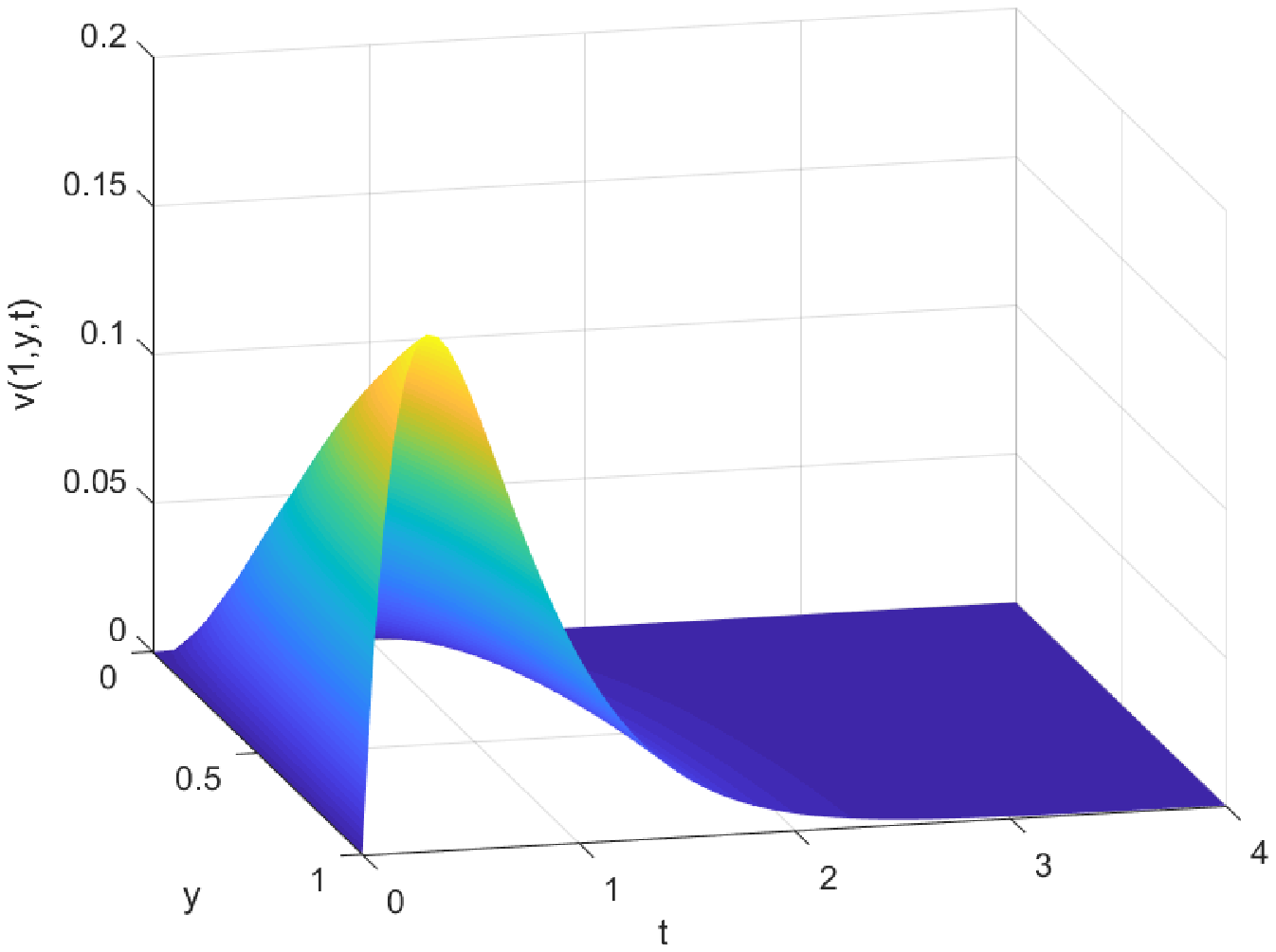}}
         \caption{  Controller  traces. }\label{Fig3}
          \end{figure}
          \begin{figure}[h]
\hspace{0.5cm}\subfigure[   $\log(\|w(\cdot,t)\|_{L^2(\Omega)})$.]
  {\includegraphics[width=0.4\textwidth]{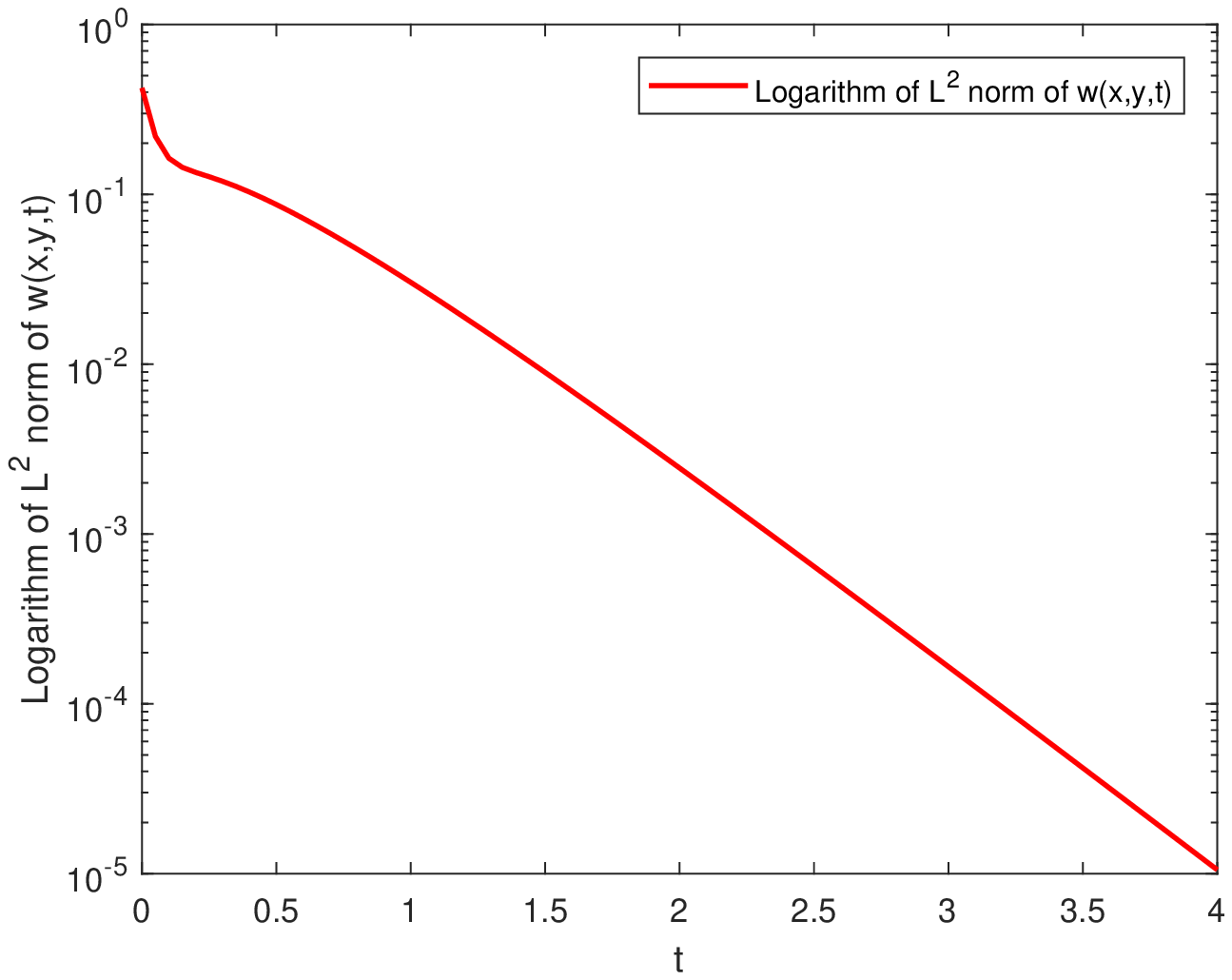}}
 \hspace{2cm}\subfigure[  $\|w(\cdot,t)\|_{L^2(\Omega)}$.]
  {\includegraphics[width=0.4\textwidth]{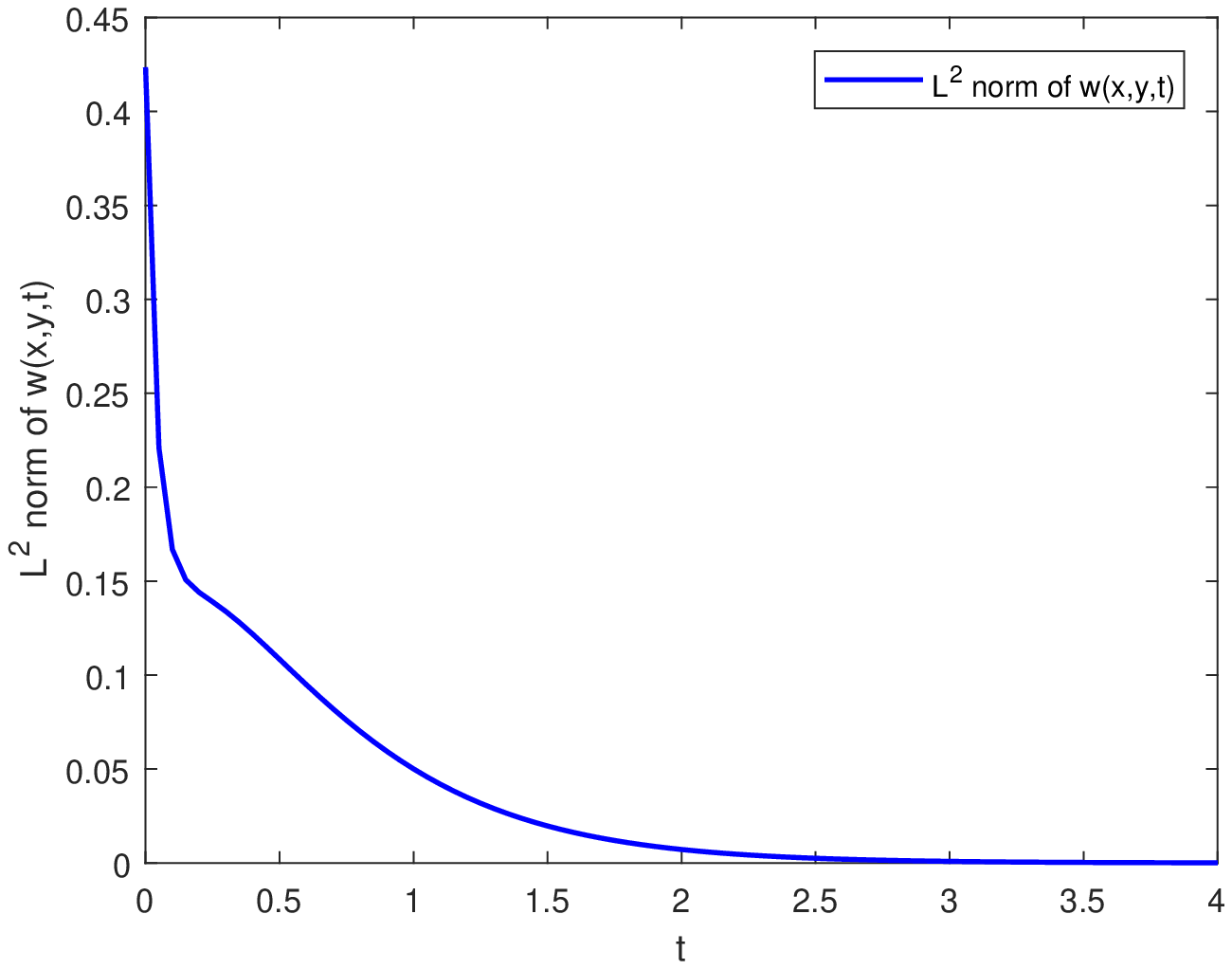}}
         \caption{  Decays of the state norm. }\label{Fig4}
          \end{figure}
 From these Figures~\ref{Fig1}-\ref{Fig4} we observe that the state of the control plant are
stabilized effectively despite the presence of the unstable  source term $\mu w(x,y,t)$.
Moreover,   the dynamic evolution is
smooth.    Figure \ref{Fig4}  implies that  the state norm  decays to zero  exponentially.
So all the   convergence in the closed-loop system   is very fast.

\section{Conclusions}\label{Conclusions}
In this paper, we consider the stabilization and observation for the unstable  heat equation in a general multi-dimensional domain. The newly developed dynamics compensation approach  and the    finite-dimensional spectral truncation technique are  exploited to
treat the
difficulties caused by instability.
Both the full state feedback law and the state observer are designed. The closed-loop system and the
observation error are convergent   to zero exponentially as $t\to \infty $.
 The developed method in this paper
provides  a new choice, in addition to the PDE backstepping method,  for dealing  with unstable PDEs,
especially for multi-dimensional unstable PDEs.
It is very interesting to extend  this new method to other unstable or anti-stable PDEs such as
the multi-dimensional wave equation and
 Euler-Bernoulli beam equation, which are our future works.


\section{Appendix}

\begin{lemma}\label{Lm202010101612}
  For any positive integer $N$, there exists a  function  $p\in L^2(\Gamma_1)$ such that
  \dref{20201091629} holds.
 \end{lemma}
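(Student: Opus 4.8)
The plan is to reduce \dref{20201091629} to the single fact that none of the eigenfunctions $\phi_j$ in \dref{20209101524} vanishes identically on the relatively open boundary patch $\Gamma_1$, and then to obtain that fact from the unique continuation property of the Laplacian. First I would, for each $j=1,\dots,N$, introduce the closed subspace $H_j=\{p\in L^2(\Gamma_1):\int_{\Gamma_1}p(x)\phi_j(x)\,dx=0\}$ of $L^2(\Gamma_1)$ and observe that $H_j$ is a \emph{proper} subspace exactly when $\phi_j|_{\Gamma_1}\not\equiv 0$ in $L^2(\Gamma_1)$. By the Baire category theorem a complete normed space cannot be written as a union of countably many proper closed subspaces (each such subspace being nowhere dense), so once every $H_j$ is known to be proper there exists $p\in L^2(\Gamma_1)\setminus\bigcup_{j=1}^N H_j$, and this $p$ satisfies \dref{20201091629}. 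Equivalently, the bounded linear map $L^2(\Gamma_1)\to\R^N$ sending $p$ to $\bigl(\int_{\Gamma_1}p\phi_1\,dx,\dots,\int_{\Gamma_1}p\phi_N\,dx\bigr)$ then has range a linear subspace of $\R^N$ lying in no coordinate hyperplane, hence containing a vector all of whose entries are nonzero.

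It therefore remains to prove $\phi_j|_{\Gamma_1}\not\equiv 0$ for each $j$. Suppose the contrary: $\phi_j|_{\Gamma_1}=0$ for some $j$. By \dref{20209101524} we also have $\partial\phi_j/\partial\nu=0$ on $\Gamma_1$, while $\Delta\phi_j=\lambda_j\phi_j$ in $\Omega$; thus $\phi_j$ has vanishing Cauchy data on the relatively open set $\Gamma_1\subset\Gamma$. Using that $\Gamma$ is $C^2$ and elliptic regularity up to $\Gamma_1$ (available from the homogeneous Neumann condition there), $\phi_j$ is regular enough near $\Gamma_1$ that Green's identity applies, and the extension of $\phi_j$ by zero across $\Gamma_1$ solves $(\Delta-\lambda_j)u=0$ in a full neighbourhood of each interior point of $\Gamma_1$, the interface terms across $\Gamma_1$ vanishing precisely because both $\phi_j$ and $\partial\phi_j/\partial\nu$ vanish there. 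Since this extension vanishes on an open set and solutions of the constant-coefficient elliptic equation $(\Delta-\lambda_j)u=0$ are real-analytic, it vanishes identically near $\Gamma_1$; real-analyticity of $\phi_j$ on the connected domain $\Omega$ then forces $\phi_j\equiv 0$ in $\Omega$, contradicting $\|\phi_j\|_{L^2(\Omega)}=1$. Alternatively one may simply invoke Holmgren's uniqueness theorem, or the boundary unique continuation property for second-order elliptic operators, directly.

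The main obstacle is exactly this last step: the nonvanishing of $\phi_j$ on $\Gamma_1$ is where the essential analytic input — unique continuation from the boundary — enters, and some care with the regularity of $\phi_j$ near the merely $C^2$ patch $\Gamma_1$ is needed to justify the Cauchy-data and Green's-identity manipulations. The remaining ingredients — that $L^2(\Gamma_1)$ is not a finite union of proper closed subspaces, and the translation of this back into \dref{20201091629} — are routine.
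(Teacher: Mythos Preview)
Your proof is correct and takes a genuinely different route from the paper's. The paper proceeds by an explicit inductive construction: it starts with $p_1=\phi_1|_{\Gamma_1}$, and at each step either $p_{k-1}$ already pairs nontrivially with $\phi_k$ on $\Gamma_1$ or one perturbs $p_{k-1}$ by a small multiple of $\phi_k|_{\Gamma_1}$ so that all previous inner products stay nonzero while the $k$-th one becomes $\gamma\|\phi_k\|_{L^2(\Gamma_1)}^2\neq0$. This construction is concrete but, exactly like yours, it rests on the fact that each $\phi_j|_{\Gamma_1}$ is nonzero in $L^2(\Gamma_1)$; the paper simply uses this without comment. You, by contrast, isolate this as the essential analytic point and supply a proof via unique continuation from the vanishing Cauchy data on $\Gamma_1$, which makes your argument more complete. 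Once that fact is secured, your nonconstructive existence step (a vector space over $\R$ is not a finite union of proper subspaces) is cleaner and shorter than the paper's induction, while the paper's approach has the advantage of producing an explicit $p$. One small remark: invoking Baire is more than you need, since $N$ is finite; the elementary linear-algebra fact you also mention (or your alternative via the range of the map to $\R^N$) already does the job.
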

\begin{proof}
Choose $p_1(x)=\phi_1(x)$ for any $x\in \Gamma_1$. Then
$\langle p_1,\phi_1\rangle_{L^2(\Gamma_1)} \neq0$. If
$ \langle \phi_2,\phi_1\rangle_{L^2(\Gamma_1)}\neq0$, then let $p_2=p_1$. Otherwise,
let
 \begin{equation} \label{20201010841}
 p_2(x)=p_1(x)+\phi_2(x).
\end{equation}
Then, $\langle p_2,\phi_1\rangle_{L^2(\Gamma_1)}\neq0$ and $\langle p_2,\phi_2\rangle_{L^2(\Gamma_1)}\neq0$.
Suppose that  we have obtain $p_{N-1}$ such that
 \begin{equation} \label{202010101602}
\langle p_{N-1}, \phi_j\rangle_{L^2(\Gamma_1)}\neq0,\ \ j=1,2,3,\cdots,N-1.
 \end{equation}
 If  $\langle p_{N-1}, \phi_N\rangle_{L^2(\Gamma_1)}\neq0$, we choose $p_N=p_{N-1}$. Otherwise,
  \begin{equation} \label{202010101604}
 p_N(x)=p_{N-1}(x)+\gamma \phi_N(x),
 \end{equation}
 where $\gamma$ small enough such that
 \begin{equation} \label{202010101605}
 \langle p_{N-1},\phi_j\rangle_{L^2(\Gamma_1)}+\gamma \langle\phi_N,\phi_j\rangle_{L^2(\Gamma_1)} \neq0,\ \ j=1,2,\cdots,N.
 \end{equation}
Therefore, the proof is complete due to the   mathematical induction.
\end{proof}

\begin{lemma}\label{Lm202010131720}
For any positive integer $N$,  define
\begin{equation} \label{202010131721}
\Lambda_N={\rm diag}(\lambda_1,\lambda_2,\cdots,\lambda_N)
\end{equation}
and
\begin{equation} \label{202010131721B}
B_N= (b_1,b_2,\cdots,b_N)^{\top},
\end{equation}
where $b_k\neq0$, $k=1,2,\cdots,N$  and
\begin{equation} \label{202010131723}
 \lambda_i \neq \lambda_j, \ \ i\neq j,\ \ i,j=1,2,\cdots,N.
\end{equation}
Then, system $(\Lambda_N, B_N)$ is controllable.

 \end{lemma}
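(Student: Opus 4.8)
The plan is to verify the Kalman rank condition for the pair $(\Lambda_N,B_N)$ by an explicit computation of the controllability matrix together with a Vandermonde determinant argument. First I would form
\begin{equation*}
\mathcal{C}_N=\left[\,B_N,\ \Lambda_NB_N,\ \cdots,\ \Lambda_N^{N-1}B_N\,\right]\in\R^{N\times N}.
\end{equation*}
Since $\Lambda_N={\rm diag}(\lambda_1,\cdots,\lambda_N)$ is diagonal, one has $\Lambda_N^{m}B_N=(\lambda_1^{m}b_1,\cdots,\lambda_N^{m}b_N)^{\top}$ for every $m\geq0$, so the $(k,m)$-entry of $\mathcal{C}_N$ equals $b_k\lambda_k^{m-1}$. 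Consequently $\mathcal{C}_N$ factors as $\mathcal{C}_N={\rm diag}(b_1,\cdots,b_N)\,V_N$, where $V_N$ is the Vandermonde matrix whose $(k,m)$-entry is $\lambda_k^{m-1}$.

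Next I would compute the two determinants in this factorization. By \dref{202010131723} the nodes $\lambda_1,\cdots,\lambda_N$ are pairwise distinct, so $\det V_N=\prod_{1\le i<j\le N}(\lambda_j-\lambda_i)\neq0$; and since $b_k\neq0$ for every $k$, $\det{\rm diag}(b_1,\cdots,b_N)=\prod_{k=1}^{N}b_k\neq0$. Hence $\det\mathcal{C}_N=\big(\prod_{k=1}^{N}b_k\big)\prod_{1\le i<j\le N}(\lambda_j-\lambda_i)\neq0$, so ${\rm rank}\,\mathcal{C}_N=N$, and therefore the pair $(\Lambda_N,B_N)$ is controllable.

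An alternative route is the Hautus test: for $\lambda\notin\{\lambda_1,\cdots,\lambda_N\}$ the matrix $\lambda I_N-\Lambda_N$ is invertible, while for $\lambda=\lambda_k$ the $k$-th row of the augmented matrix $[\lambda_kI_N-\Lambda_N,\ B_N]$ is $(0,\cdots,0,b_k)$ with $b_k\neq0$, so in both cases the augmented matrix has rank $N$; either argument is routine. There is essentially no obstacle here — the only place the hypotheses enter is in asserting that both factors ${\rm diag}(b_1,\cdots,b_N)$ and $V_N$ are nonsingular, which is exactly guaranteed by $b_k\neq0$ for all $k$ together with the distinctness of the $\lambda_j$.
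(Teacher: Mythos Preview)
Your proof is correct and follows essentially the same approach as the paper: both compute the controllability matrix, factor it as ${\rm diag}(b_1,\ldots,b_N)$ times a Vandermonde matrix, and conclude from the nonvanishing of the Vandermonde determinant (the paper writes the product as $\prod_{1\le i<j\le N}(\lambda_i-\lambda_j)$, which differs from yours only by a sign). Your added Hautus-test alternative is not in the paper but is a perfectly valid and equally routine variant.
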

\begin{proof}
  By a simple computation, the controllability matrix  of system  $(\Lambda_N, B_N)$  is
  \begin{equation} \label{202010131736}
P_c=\begin{pmatrix}
 b_1&\lambda_1b_1&\cdots&\lambda_1^{N-1}b_1 \\
 b_2&\lambda_2b_2&\cdots&\lambda_2^{N-1}b_2 \\
 \vdots&\vdots&\cdots&\vdots\\
 b_N&\lambda_Nb_N&\cdots&\lambda_N^{N-1}b_N\\
\end{pmatrix}.
\end{equation}
Furthermore,
\begin{equation} \label{202010131741}
| P_c|=b_1b_2\cdots b_N\left|\begin{pmatrix}
 1&\lambda_1 &\cdots&\lambda_1^{N-1}  \\
 1&\lambda_2 &\cdots&\lambda_2^{N-1}  \\
 \vdots&\vdots&\cdots&\vdots\\
 1&\lambda_N &\cdots&\lambda_N^{N-1} \\
\end{pmatrix}\right|=b_1b_2\cdots b_N \prod_{1\leq i<j\leq N}(\lambda_i-\lambda_j)\neq0 .
\end{equation}
Therefore, the proof is  complete due to the Kalman's controllability rank condition.
\end{proof}

\begin{thebibliography}{0}

\bibitem{Pisano}A. Baccoli, A. Pisano, and Y. Orlov, Boundary control of coupled reaction-diffusion processes with
constant parameters, {\it Automatica}, 54(2015), 80-90.

\bibitem{CoronTrelat2004SICON}  J.M. Coron and E. Tr\'{e}lat, Global steady-state controllability of one-dimensional semilinear heat equations, {\it SIAM Journal on Control and Optimization}, 43(2004), 549-569.







\bibitem{FPart1} H. Feng,   X.H. Wu  and B.Z. Guo,  Actuator dynamics compensation in stabilization of
abstract linear systems, {\it arXiv:2008.11333}, https://arxiv.org/abs/2008.11333.



  \bibitem{FPart2} H. Feng,  X.H. Wu  and B.Z. Guo, Dynamics compensation in  observation of abstract   linear systems,   {\it   	arXiv:2009.01643},                   https://arxiv.org/abs/2009.01643.

 \bibitem{GuoLiu2104}   B.Z. Guo and J.J. Liu, Sliding mode control and active disturbance rejection control to the stabilization of one-dimensional Schr\"{o}dinger equation subject to boundary control matched  disturbance, {\it International Journal of Robust and Nonlinear Control},   24(2014), 2194-2212.


%
%



%
%
%
%
%
%
%
%

%
%
%
%
%
%
%
%
%
%

\bibitem{Heatbook} D.W. Hahn and M.N. \"{O}zisik, {\it Heat Conduction}, John Wiley \& Sons, Inc., New Jersey, 2012.
%



\bibitem{Backstepping1}   M. Krstic, Systematization of approaches to adaptive boundary
stabilization of PDEs, {\it Int. J. Robust Nonlinear Control}, 16(2006), 801-818.

\bibitem{Backstepping2} M. Krstic   and A. Smyshlyaev,      Adaptive boundary control for unstable parabolic
PDEs-Part I: Lyapunov Design,  {\it IEEE Trans. Automat.
Control}, 53(2008), 1575-1591.




 \bibitem{Kristic2008hyperbolic} M. Krstic, A. Smyshlyaev, Backstepping boundary control for first order
hyperbolic PDEs and application to systems with actuator and sensor delays,
{\it Systems Control Lett.}, 57(2008), 750-758.


\bibitem{MK} M. Krstic, Adaptive control of an anti-stable wave PDE,
 {\it Dyn. Contin. Discrete Impuls. Syst. Ser. A Math. Anal.}, 17(2010), 853-882.

%
\bibitem{LT3} I. Lasiecka  and R. Triggiani, {\it Control Theory for Partial
Differential Equations: Continuous and Approximation Theories}, Vol.
II, Cambridge University Press, Cambridge, 2000.

\bibitem{Liuw} W. Liu, Boundary feedback stabilization of an untable heat
equation, {\it SIAM J. Control Optim.}, 42(2003), 1033-1043.


\bibitem{Meu} T. Meurer, {\it Control of Higher Dimensional PDEs: Flatness and Backstepping Designs}, Springer, Berlin, 2012.

 \bibitem{ZuaZuasimulation} M. Negreanu and E. Zuazua, Uniform boundary controllability of a discrete 1-D wave equation,
{\it Systems Control Lett.}, 48(2003), 261-279.

%
%
%





 \bibitem{Pazy1983Book} A. Pazy,  {\it Semigroups of Linear Operators and Applications
  to Partial Differential Equations},
  Springer-Verlag, New York, 1983.


\bibitem{PrieurandTrelat2019TAC} C. Prieur and  E. Tr\'{e}lat,  Feedback stabilization of a 1-D linear reaction-diffusion equation with delay boundary control, {\it IEEE Transactions on Automatic Control}, 64(2018), 1415-1425.


\bibitem{Andrey1} A. Smyshlyaev and M. Krstic, Closed-form
boundary state feedbacks for a class of 1-D partial
integro-differential equations, {\it  IEEE Trans. Automat. Control},
49(2004),  2185-2202.


\bibitem{Andrey2} A. Smyshlyaev and M. Krstic,
backstepping observers for a class of parabolic PDEs, {\it  Systems
Control Lett.},  54(2005), 613-625.

\bibitem{Andrey3} A. Smyshlyaev and M. Krstic, Adaptive boundary control for unstable parabolic PDEs. II.
Estimation-based designs, {\it  Automatica}, 43(2007), 1543-1556.

\bibitem{Andrey4} A. Smyshlyaev and M. Krstic,
  Adaptive
boundary control for unstable parabolic PDEs, III. Output feedback
examples with swapping identifiers, {\it  Automatica}, 43(2007),
1557-1564.

\bibitem{Andrey5} A. Smyshlyaev, B.Z. Guo, and M. Krstic, Arbitrary decay rate for
Euler-Bernoulli beam by backstepping boundary feedback, {\it  IEEE
Trans. Automat. Control},  54(2009), 1134-1140.





\bibitem{Weiss1997TAC} G. Weiss and R. Curtain, Dynamic stabilization of regular linear systems, {\it IEEE Transactions on Automatic Control}, 42(1997),   4-21.











\end{thebibliography}
\end{document}